\def\wich#1{\mathinner{\langle{#1}\rangle}}
\newcommand{\beq}{\begin{equation}}
\newcommand{\eeq}{\end{equation}}
\newcommand{\beqnn}{\begin{equation*}}
\newcommand{\eeqnn}{\end{equation*}}
\newcommand{\bea}{\begin{eqnarray}}
\newcommand{\eea}{\end{eqnarray}}
\newcommand{\beann}{\begin{eqnarray*}}
\newcommand{\eeann}{\end{eqnarray*}}
\newcommand{\bes} {\begin{subequations}}
\newcommand{\ees} {\end{subequations}}
\newcommand{\ket}[1]{ | #1\rangle}
\newcommand{\bra}[1]{\langle #1 | }
\newcommand{\mO}{\mathcal{O}}
\newcommand{\ident}{\openone}
\newcommand{\Tr}{\mathrm{Tr}}
\newcommand{\eps}{\varepsilon}
\newcommand{\abs}[1]{\ensuremath{\left| #1 \right|}}
\def\Ket#1{\left|#1\right>}
\newcommand{\ignore}[1]{}
\newtheorem{lemma}{Lemma}
\begin{document}
\title{When 
Diabatic Trumps Adiabatic in Quantum Optimization}
\author{Siddharth Muthukrishnan}
\affiliation{Department of Physics and Astronomy, University of Southern California, Los Angeles, California 90089, USA}
\affiliation{Center for Quantum Information Science \& Technology, University of Southern California, Los Angeles, California 90089, USA}

\author{Tameem Albash}
\affiliation{Department of Physics and Astronomy, University of Southern California, Los Angeles, California 90089, USA}
\affiliation{Center for Quantum Information Science \& Technology, University of Southern California, Los Angeles, California 90089, USA}
\affiliation{Information Sciences Institute, University of Southern California, Marina del Rey, CA 90292}

\author{Daniel A. Lidar}
\affiliation{Department of Physics and Astronomy, University of Southern California, Los Angeles, California 90089, USA}
\affiliation{Center for Quantum Information Science \& Technology, University of Southern California, Los Angeles, California 90089, USA}
\affiliation{Department of Electrical Engineering, University of Southern California, Los Angeles, California 90089, USA}
\affiliation{Department of Chemistry, University of Southern California, Los Angeles, California 90089, USA}

\begin{abstract}
We provide and analyze examples that counter the widely made claim that tunneling is needed for a quantum speedup in optimization problems. The examples belong to the class of perturbed Hamming-weight optimization problems. In one case, featuring a plateau in the cost function in Hamming weight space, we find that the adiabatic dynamics that make tunneling possible, while superior to simulated annealing, result in a slowdown compared to a diabatic cascade of avoided level-crossings. This, in turn, inspires a classical spin vector dynamics algorithm that is at least as efficient for the plateau problem as the diabatic quantum algorithm. In a second case whose cost function is convex in Hamming weight space, the diabatic cascade results in a speedup relative to both tunneling and classical spin vector dynamics.
\end{abstract}
\maketitle
%
%
The possibility of a quantum speedup for finding the solution of classical optimization problems is tantalizing, as a quantum advantage for this class of problems would provide a wealth of new applications for quantum computing. The goal of many optimization problems can be formulated as finding an $n$-bit string $x_{\mathrm{opt}}$ that minimizes a given function $f(x)$, which can be interpreted as the energy of a classical 
Ising spin system, whose ground state is $x_{\mathrm{opt}}$.  Finding the ground state of such systems can be hard if, e.g., the system is strongly frustrated, resulting in a complex energy landscape that cannot be efficiently explored with any known algorithm due to the presence of many local minima \cite{Nishimori-book}. 
This can occur, e.g., in classical simulated annealing (SA) \cite{kirkpatrick_optimization_1983}, 
when the system's state is trapped in a local minimum.  
Provided the barriers between minima are sufficiently thin, quantum mechanics allows the system to tunnel in order to escape from such traps, though such comparisons must be treated with care since the quantum and classical potential landscapes are in general different.  It is with this potential advantage over classical annealing that quantum annealing \cite{PhysRevB.39.11828,finnila_quantum_1994,kadowaki_quantum_1998} and the quantum adiabatic algorithm (QA), were proposed \cite{farhi_quantum_2000}. Thermal hopping and quantum tunneling provide two starkly different mechanisms for solving optimization problems, and finding optimization problems that favor the latter continues to be an open theoretical question \cite{EPJ-ST:2015,Heim:2014jf}. To attack this question, in this work we focus on a well-known class of problems known as perturbed Hamming weight oracle (PHWO) problems. These are problems for which instances can be generated where QA either has an advantage over classical random search algorithms with local updates, such as SA \cite{Farhi-spike-problem,Reichardt:2004}, or has no advantage \cite{vanDam:01,Reichardt:2004}. Moreover, for PHWO problems with qubit permutation symmetry there is an elegant interpretation of tunneling in terms of a semiclassical potential \cite{Farhi-spike-problem,Schaller:2007uq}, which we exploit in this work.

If the total evolution time is sufficiently long so that the adiabatic condition is satisfied, QA is guaranteed to reach the ground state with high probability \cite{Jansen:07,lidar:102106}. However, this condition is only sufficient, and the scaling of the time to reach the adiabatic regime is therefore not necessarily the right computational complexity metric. The optimal time to solution (TTS$_\mathrm{opt}$), commonly used in benchmarking studies \cite{speedup} [also see the Supplementary Material (SM)], is a more natural metric. It is defined as the minimum total time such that the ground state is observed at least once with desired probability $p_d$:
\beq 
\label{eqt:TTSopt}
\text{TTS}_\mathrm{opt} = \min_{t_f > 0} \left( t_f \frac{\ln( 1- p_d)}{\ln \left(1-p_{\mathrm{GS}}(t_f) \right)} \right) \ ,
\eeq
where $t_f$ is the duration (in QA) or the number of single spin updates (in SA) of a single run of the algorithm, and $p_{\mathrm{GS}}(t_f)$ is the probability of finding the ground state in a single such run. The use of TTS$_\mathrm{opt}$ allows for the possibility that multiple short (diabatic) runs of the evolution, each lasting an optimal annealing time $({t_f})_{\textrm{opt}}$, result in a better scaling than a single long (adiabatic) run with an unoptimized $t_f$.

Here we demonstrate that for a specific class of PHWO problems, the optimal evolution time $t_f$ for QA is far from being adiabatic, and this optimal evolution involves no multi-qubit tunneling.  Instead the system leaves the ground state, only to return through a sequence of diabatic transitions associated with avoided-level crossings.  We also show that spin vector dynamics, which can be interpreted as a semi-classical limit of the quantum evolution with a product-state approximation, evolves in an almost identical manner. 

We note that PHWO problems are strictly toy problems since these problems are typically represented by highly non-local Hamiltonians (see 
the SM) and thus are not physically implementable, in the very same sense that the adiabatic Grover search problem is unphysical \cite{Roland:2002ul,RPL:10}. Nevertheless, these problems provide us with important insights into the mechanisms behind a quantum speed-up, or lack thereof.

\textit{The plateau problem}.---%
%
We focus on PHWO problems with a Hamiltonian of the form:
\beq 
\label{eq:pertham}
f(x) = \begin{cases} \abs{x} + p(\abs{x}) & l<\abs{x}<u,\\
\abs{x} & \text{elsewhere} \end{cases} \ , 
\eeq
where $\abs{x}$ denotes the Hamming weight of the bit string $x\in\{0,1\}^n$. 
The minimum of this function is 
$x_{\mathrm{opt}} = 00\cdots0$.
In the absence of a perturbation, i.e., $p(\abs{x}) = 0$, the problem can be solved efficiently classically using a local spin-flip algorithm such as SA, since flipping any `$1$' to a `$0$' will lower the energy. Note that SA can be interpreted as a random walk, where the decision to walk left or right and flip a spin is given by the Metropolis update rule. 

QA evolves the system from its ground state at $t=0$ according to a time-dependent Hamiltonian  
\beq 
\label{eqt:QuantumH}
H(s) =\frac{1}{2}\left(1-s \right)  \sum_i \left(\ident - \sigma_i^x \right)+ s \sum_{x} f(x) \ket{x} \bra{x} \ ,
\eeq
where we have chosen the standard transverse field ``driver'' Hamiltonian $H(0)$ that assumes no prior knowledge of the form of $f(x)$, and a linear interpolating schedule, with $s\equiv t/t_f$ being the dimensionless time parameter. 

\begin{figure}[b] 
   \subfigure[]{\includegraphics[width=0.32\textwidth]{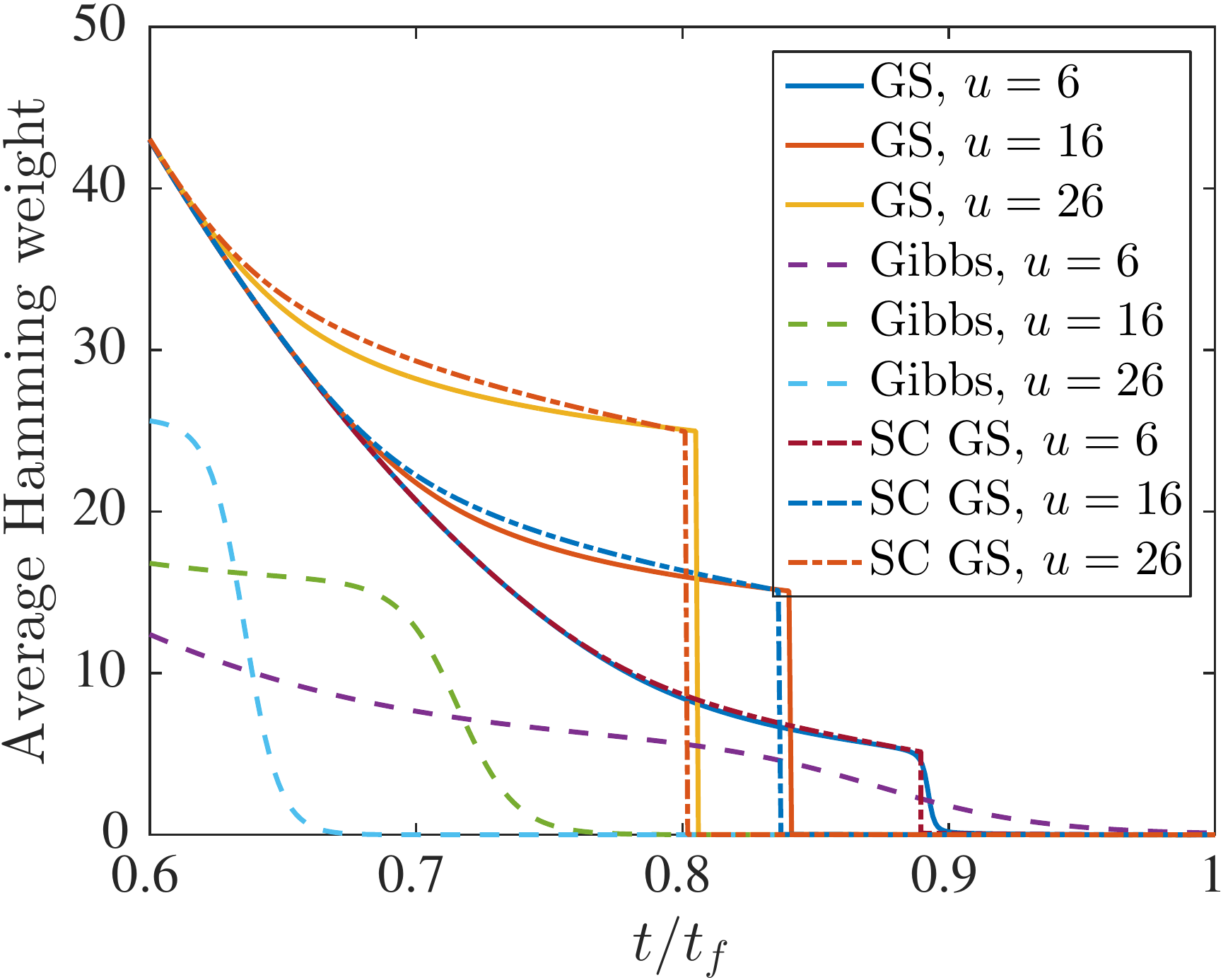} \label{fig:GSGibbs}}
   \subfigure[]{\includegraphics[width=0.32\textwidth]{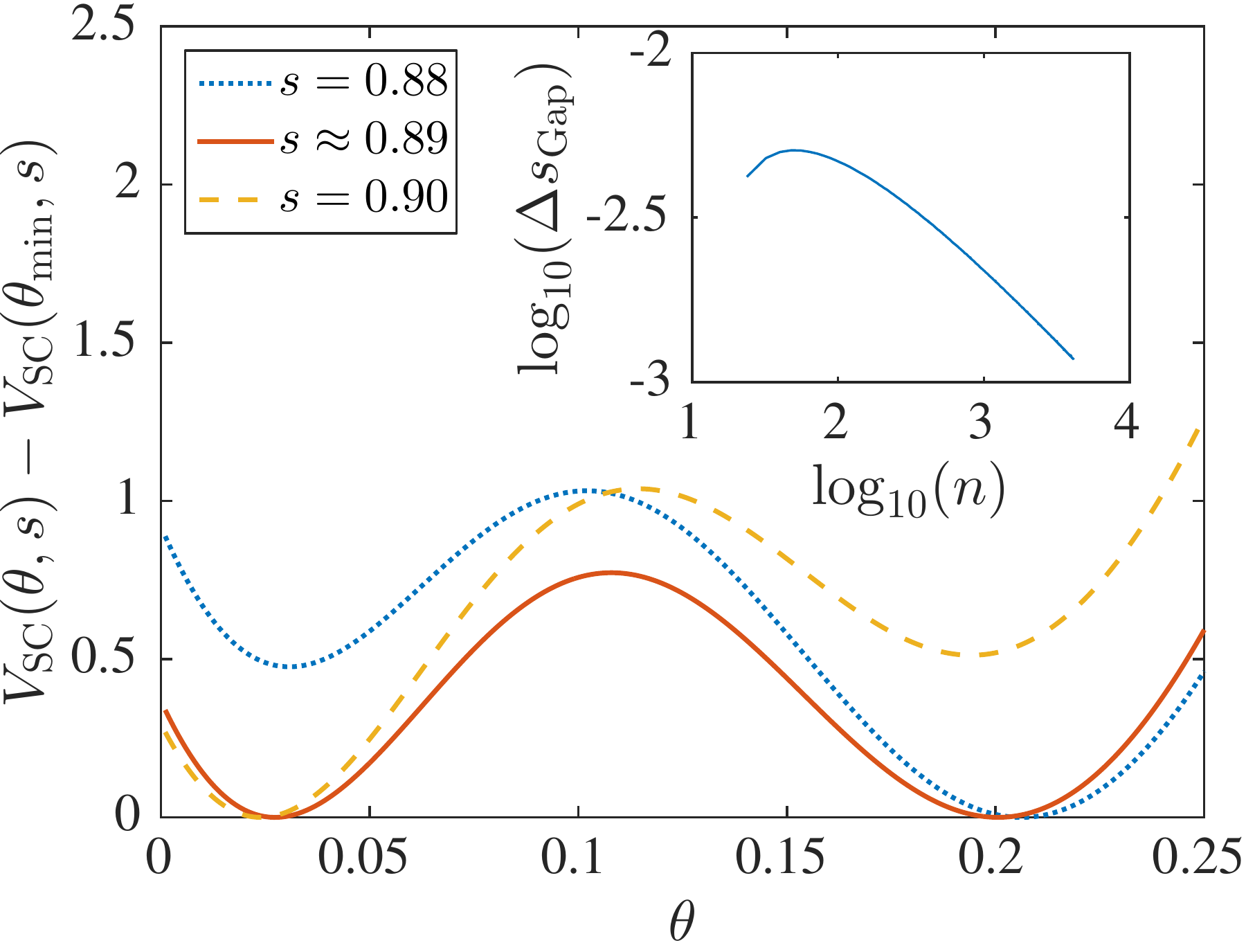} \label{fig:Veff}}
   \caption{(a) $\wich{\mathrm{HW}}$ in the instantaneous quantum ground state state (GS), the classical Gibbs state $\rho = e^{-\beta H_Z} / \mathcal{Z}$ (Gibbs), and the instantaneous quantum ground state predicted from the semi-classical potential (SC GS), as a function of their corresponding annealing parameters. The sharp drop in the GS and SC GS curves is due to a tunneling event wherein $\sim u$ qubits are flipped. Note that we use $t/t_f$ also for the Gibbs state, though in actuality the parameter is $\beta$, with a linear schedule: $\beta = 0.1 + 5.9s$.  (b) The semi-classical potential for $n=512$ and $u=6$ exhibits a double-well degeneracy at the position $s \approx 0.89$ (solid) of the sharp drop observed in (a), but is non-degenerate before and after this point (dotted and dashed), thus leading to a discontinuity in the position of its global minimum. The same is observed for other $u$ values we have checked (not shown). Inset: the difference in the position of the minimum gap from exact diagonalization and the position of the double-well degeneracy from the semi-classical potential, as a function of $n$, at $u=6$ (log-log scale).}
\end{figure}

Reichardt proved the following lower bound on the spectral gap for adiabatic evolutions for general PHWO problems \cite{Reichardt:2004}:
\beq \label{eqt:Reichbound}
\text{Gap}[H^{\text{pert}}(s)] \geq \text{Gap}[H^{\text{unpert}}(s)] - \mO\left(h \frac{u-l}{\sqrt{l}} \right),
\eeq
where $h = \max_x p(|x|)$ is the maximum height of the perturbation. 
Note that while the lower-bound holds for all perturbations, it is only non-trivial when it is positive for all $s \in [0,1]$. Details of our simulations methods 
and a summary of the proof are given in the SM.

We focus on the following ``plateau'' problem: 
%
\beq 
\label{eqt:plateau}
f(x) = \begin{cases}
u -1 &  l<\abs{x}<u,\\
\abs{x} & \text{otherwise}
\end{cases} \ .
\eeq
%
Thus, here $h=u-l-1$.  We note that when both $l,u = \mathcal{O}(1)$
a lower bound is not obtainable from Reichardt's proof (this is explained in 
the SM), although numerical diagonalization reveals a constant gap.  We demonstrate below that this case is nevertheless particularly hard for SA, and hence it is the focus of our study.  

\begin{figure*}[t] 
   \centering
 \subfigure[]{\includegraphics[width=0.32\textwidth]{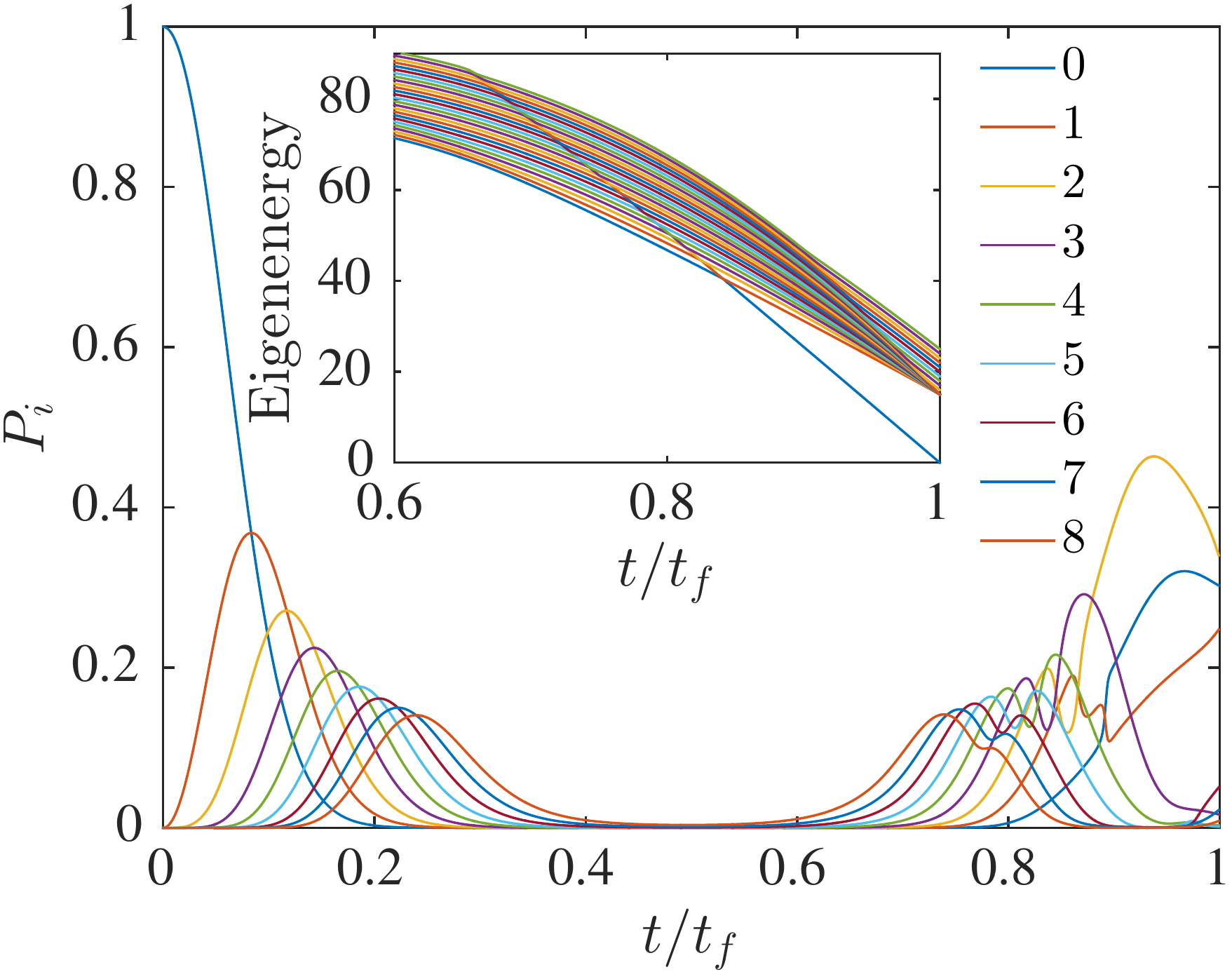}  \label{fig:QA_EnergyOverlap}}
   \subfigure[]{\includegraphics[width=0.32\textwidth]{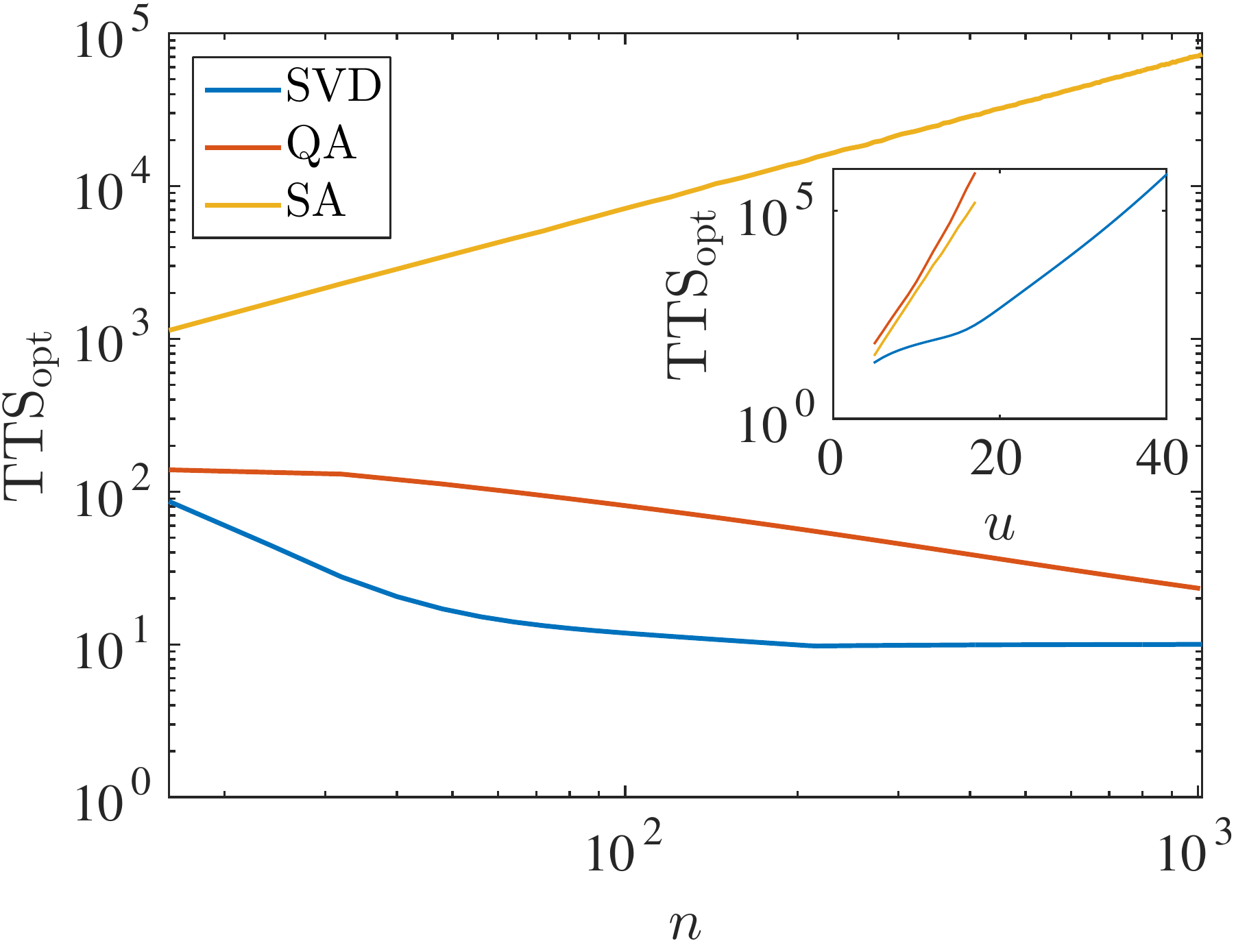} \label{fig:TTSScaling}}
  \subfigure[]{\includegraphics[width=0.32\textwidth]{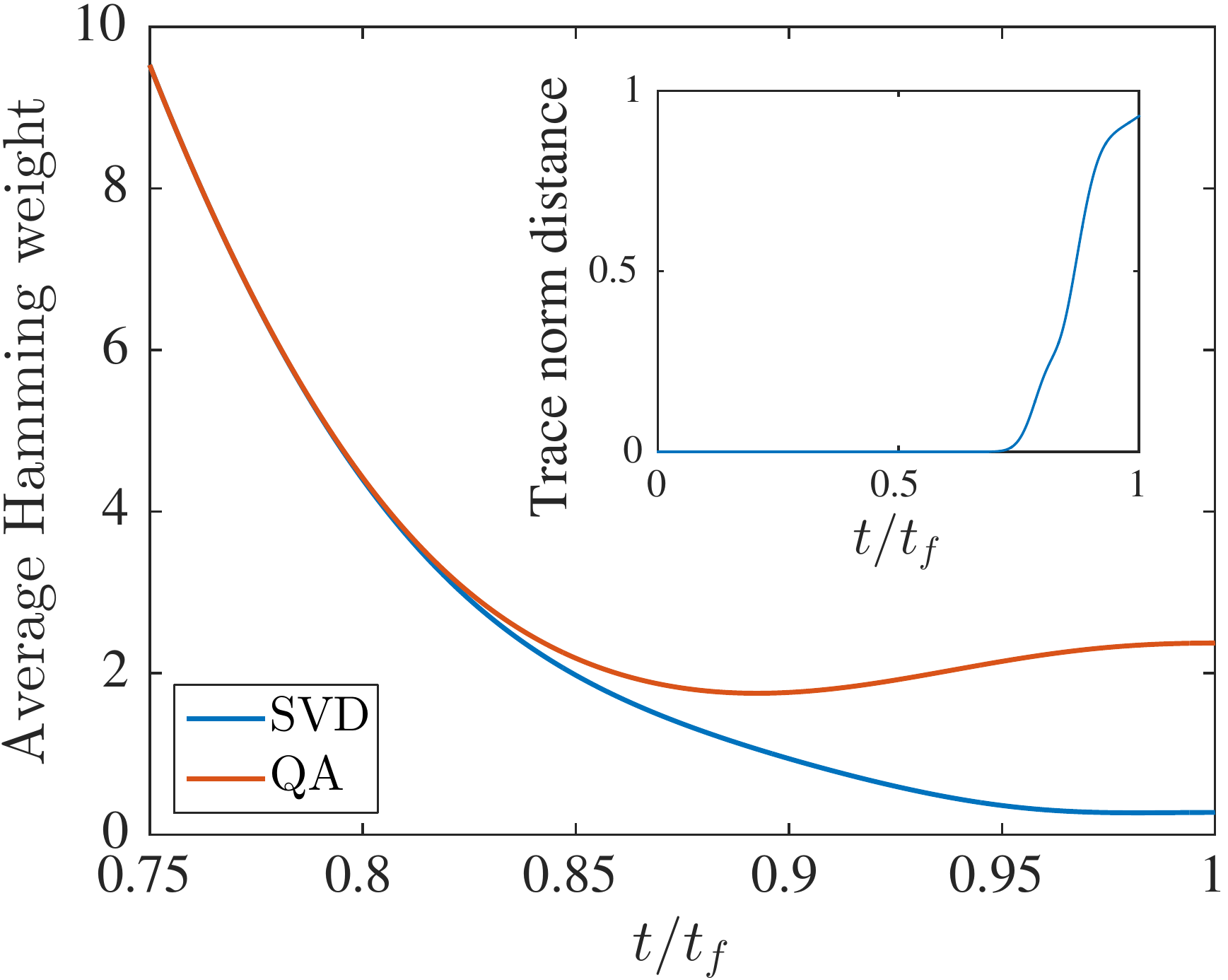}   \label{fig:QA_O3_AveHW}}
   \caption{Diabatic QA \textit{vs} SA and SVD. (a) Population $P_i$ in the $i$th energy eigenstate along the diabatic QA evolution at the optimal TTS for $n=512$ and $u=6$.  Excited states are quickly populated at the expense of the ground state.  By $t/t_f = 0.5$ the entire population is outside the lowest $9$ eigenstates.  In the second half of the evolution the energy eigenstates are repopulated in order.  Inset: the eigenenergy spectrum along the evolution.  Note the sequence of  avoided level crossings that unmistakably line up in the spectrum to reach the ground state. (b) Scaling of the optimal TTS with $n$ for $u=6$, with an optimized number of single-spin updates for SA, and equal $({t_f})_\textrm{opt}$ for QA and SVD. SA scales as $\mathcal{O}(n)$, a consequence of performing sequential single-spin updates. QA and SVD both approach $\mathcal{O}(1)$ scaling as $n$ increases. Here we set $p_d=0.7$ in Eq.~\eqref{eqt:TTSopt}, in order to be able to observe the saturation of SVD's TTS to the point where a single run suffices, i.e., TTS$_\textrm{opt} = ({t_f})_\textrm{opt}$. The conclusion is unchanged if we increase $p_d$: this moves the saturation point to larger $n$ for both SVD and QA, and we have checked that SVD always saturates before QA.
Inset: scaling as a function of $u$ for $n=1008$. SVD is again seen to exhibit the best scaling, while for this value of $n$ the scaling of QA and SA is similar (QA's scaling with $n$ improves faster than SA's as a function of $n$, at constant $u$). (c) $\wich{\mathrm{HW}}$ of the QA wavefunction and the SVD state (defined as the product of identical spin-coherent states) for $n=512$ and $u = 6$.  The behavior of the two is identical up to $t/t_f \approx 0.8$, when they begin to differ significantly, but neither displays any of the sharp changes observed in Fig.~\ref{fig:GSGibbs} for the instantaneous ground state.  Inset: the trace-norm distance between the QA and SVD states, showing that they remain almost indistinguishable until $t/t_f \approx 0.8$.}
\end{figure*}
%

\textit{Adiabatic dynamics}.---%
We now demonstrate that, under the assumption of adiabatic dynamics, QA efficiently tunnels through a barrier and solves the plateau problem in at most linear time, while SA requires a time that grows polynomially in the problem size $n$.

In the adiabatic (long-time) limit, SA follows the thermal Gibbs state parametrized by the inverse temperature $\beta$, whereas QA follows the instantaneous ground state parametrized by $s$. 
We quantify the distance of these two states from the target (the $\ket{0}^{\otimes n}$ state) by calculating the expectation value of the Hamming weight operator, defined as $\mathrm{HW} = \frac{1}{2} \sum_{i=1}^n \left( \ident - \sigma^z_i \right)$.  In particular, when $\wich{\mathrm{HW}} \approx 0$, the system has a high probability of being in the state $\ket{0}^{\otimes n}$.

As we tune (the annealing parameters) $\beta$ and $s$, the plateau in Eq.~\eqref{eqt:plateau} induces a dramatic change in $\wich{\mathrm{HW}}$, of the order of the plateau width $u$ and over a narrow range of the annealing parameters, as shown in Fig.~\ref{fig:GSGibbs}.  
For SA, traversing the plateau to reach the Gibbs state is a hard problem because, as the random walker moves closer to $x_{\mathrm{opt}}$, the probability to hop in the wrong direction of increasing Hamming weight becomes greater than the reverse. As we prove in 
the SM, consequently SA takes $\mathcal{O}(n^{u-l-1})$ single-spin updates to find the ground-state.

Unlike SA, to solve the plateau problem QA must tunnel through an energy barrier. To demonstrate this, we first construct the semiclassical effective potential arising from the spin-coherent path-integral formalism \cite{klauder1979path}:
\beq \label{eq:vsc}
V_{\mathrm{SC}} = \bra{\theta,\phi} H(s) \ket{\theta,\phi}
\eeq
where $\ket{\theta,\phi}$ are the spin-1/2 (symmetric) coherent states \footnote{%
The semiclassical potential has been used profitably in various QA studies, e.g., Refs.~\cite{Farhi-spike-problem,Schaller:2007uq,FarhiAQC:02,Boixo:2014yu}}.  
The potential captures all the important features of the quantum Hamiltonian~\eqref{eqt:QuantumH}: it displays a degenerate double well potential almost exactly at the point of the minimum gap [see Fig.~\ref{fig:Veff}]; the discontinuous change in the position of the global minimum of the potential gives rise to a nearly identical change in $\wich{\textrm{HW}}$ for the spin-coherent state [see Fig.~\ref{fig:GSGibbs}]. This agreement improves with increasing $n$, which is expected from  standard large-spin arguments \cite{Lieb:1973}.  
The dramatic drop in $\wich{\textrm{HW}}$ seen in Fig.~\ref{fig:GSGibbs} implies that $\sim u$ qubits have to be flipped in order to follow the instantaneous ground state.  Since the double-well potential becomes degenerate at the point where this flipping happens, as seen in Fig.~\ref{fig:Veff}, QA tunnels through the barrier in order to adiabatically follow the ground state. The constant minimum gap implies that this tunneling event happens in a time that is dictated by the scaling of the numerator of the adiabatic condition.
In our case this numerator turns out to be well approximated by the matrix element of $H(s)$ between the ground and first excited states, leading to $t_f\sim\mathcal{O}(n^{0.5})$ in the adiabatic limit (see 
the SM for details), thus polynomially outperforming SA whenever $u-l\geq 2$.

\textit{Optimal QA via Diabatic Transitions}.---%
Even though QA encounters a constant gap and can tunnel efficiently, the possibility remains that this does not lead to an optimal TTS, since this result assumes the adiabatic limit. We thus consider a diabatic form of QA and next demonstrate, using the optimal TTS criterion defined in Eq.~\eqref{eqt:TTSopt}, that the optimal annealing time for QA is far from adiabatic. Instead, as shown in Fig.~\ref{fig:QA_EnergyOverlap}, the optimal TTS for QA is such that the system leaves the instantaneous ground state for most of the evolution and only returns to the ground state towards the end. The cascade down to the ground state is mediated by a sequence of avoided energy level-crossings.  As $n$ increases for fixed $u$, repopulation of the ground state improves for fixed $({t_f})_{\textrm{opt}}$, hence causing TTS$_\mathrm{opt}$ to decrease with $n$, as seen Fig.~\ref{fig:TTSScaling}, until it saturates to a constant at the lowest possible value, corresponding to a single run at $({t_f})_{\textrm{opt}}$. At this point the problem is solved in constant time $({t_f})_{\textrm{opt}}$, compared to the $\sim\mathcal{O}(n^{0.5})$ scaling of the adiabatic regime. Moreover, as shown in Fig.~\ref{fig:QA_O3_AveHW}, there are no sharp changes in $\wich{\mathrm{HW}}$, suggesting that the non-adiabatic dynamics do not entail multi-qubit tunneling events, unlike the adiabatic case.  

Given the absence of tunneling in the time-optimal quantum evolution, we are motivated to consider a semiclassical limit of the evolution, particularly that of classical spin-vector dynamics (SVD), which we describe in detail in 
the SM. SVD can be derived 
as the saddle-point approximation to the path integral formulation of QA in the spin-coherent basis \cite{owerre2015macroscopic}.  The equations share the same qubit permutation symmetry as the quantum Hamiltonian, which significantly simplifies the computation of the oracle [i.e., the potential (and its derivatives), now given by Eq.~\eqref{eq:vsc}] assumed for SVD.
The SVD equations are equivalent to the Ehrenfest equations for the magnetization under the assumption that the density matrix is a product state, i.e., $\rho = \otimes_{i=1}^n \rho_{i}$, where $\rho_i$ denotes the state of the $i$th qubit.  

As we show in Fig.~\ref{fig:TTSScaling}, the scaling of SVD's optimal TTS also saturates to a constant time, i.e., $({t_f})_{\textrm{opt}}$. Moreover, it reaches this value earlier (as a function of problem size $n$) than QA, thus outperforming QA for small problem sizes, while for large enough $n$ both achieve $\mathcal{O}(1)$ scaling. As seen in the inset, SVD's advantage persists as a function of $u$ at constant $n$. 

The dynamics of QA in the non-adiabatic limit are well approximated by SVD until close to the end of the evolution, as shown in Fig.~\ref{fig:QA_O3_AveHW}: the trace-norm distance between the instantaneous states of QA and SVD is almost zero until $t/t_f \approx 0.8$, after which the states start to diverge. This suggests that SVD is able to replicate the QA dynamics up to this point, and only deviates because this makes it more successful at repopulating the ground state than QA.

\textit{Discussion}.---%
For the class of PHWO problems studied here, we have demonstrated that tunneling is 
not necessary to achieve the optimal TTS. Instead, the optimal trajectory uses diabatic transitions to first scatter completely out of the ground state and return via a sequence of avoided level crossings. 
This use of diabatic transitions is similar in spirit to those studied in Refs.~\cite{Somma:2012kx, crosson2014different, Hen2014,Steiger:2015fk}, but there are some important differences.
essentially, our PHWO findings hold for the standard formulation of QA, without any fine-tuning of the interpolation schedule or the Hamiltonian. However, while both the adiabatic and diabatic quantum algorithms outperform SA for the plateau problem, the faster quantum diabatic algorithm is not better than the classical SVD algorithm for this problem. 
These results extend beyond the plateau problem: as we show in the SM 
even the ``spike" problem studied in Ref.~\cite{Farhi-spike-problem}---which is in some sense the antithesis of the plateau problem since it features a sharp spike at a single Hamming weight---also exhibits the diabatic-beats-adiabatic phenomenon, indicating that tunneling is not required to efficiently solve the problem. Moreover, SVD is faster for this problem as well. 

\begin{figure}[t] 
   \centering
\includegraphics[width=0.32\textwidth]{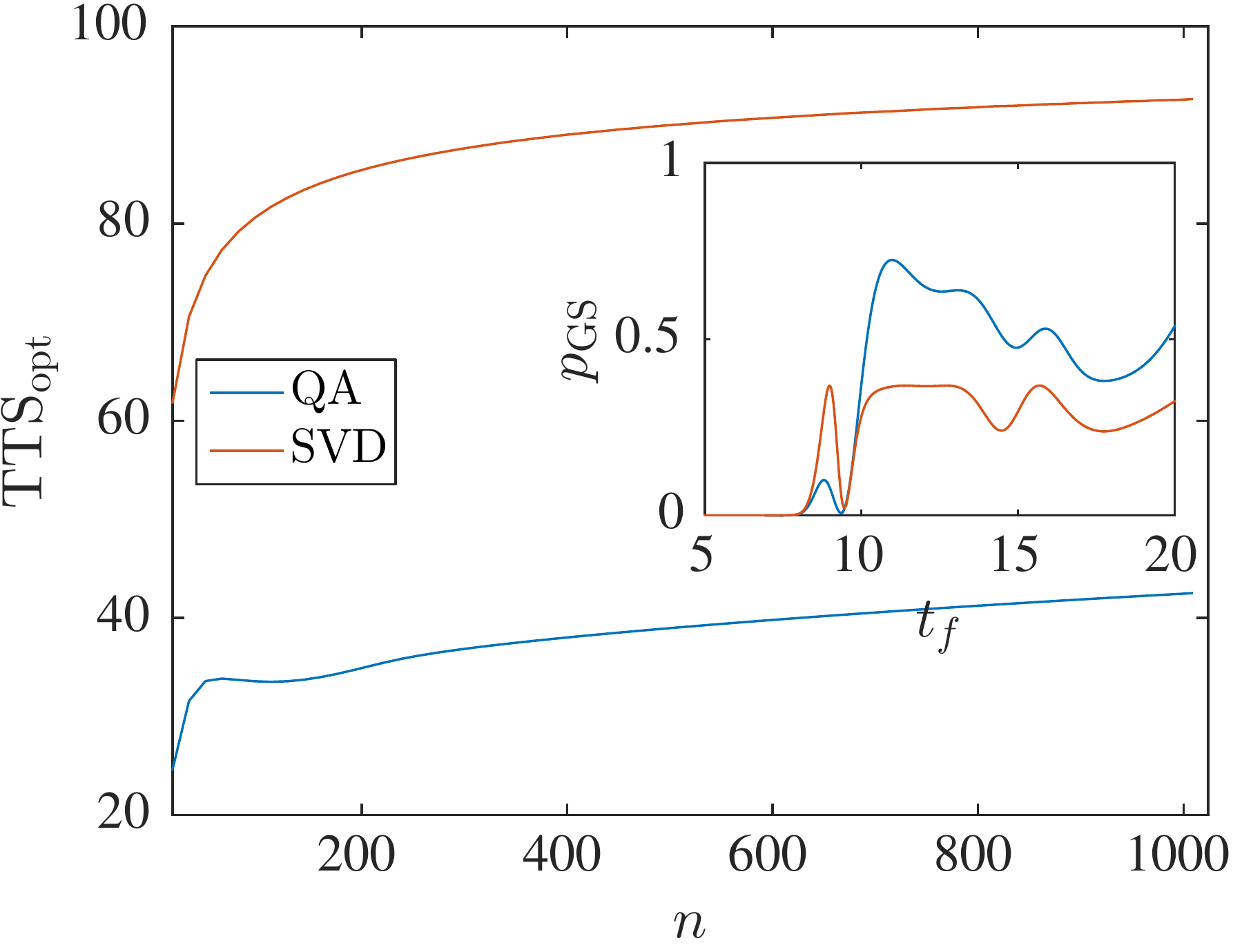}
   \caption{The optimal TTS for the potential given in Eq.~\eqref{eqt:jarret}.  QA outperforms SVD over the range of problem sizes we were able to check. The reason can be seen in the inset, which displays the ground state probability for SVD and QA for different annealing times $t_f$, with $n = 512$.  The optimal annealing time for SVD occurs at the first peak in its ground state probability ($t_f \approx 8.98$), whereas the optimal annealing time for QA occurs at the much larger \emph{second} peak in its ground state probability ($t_f \approx 10.91$).}
   \label{fig:ConvexPGS}
\end{figure}

However, the mechanism we found here, of a ``lining-up'' of the avoided level crossings with an associated ``diabatic cascade" [seen in Fig.~\ref{fig:QA_EnergyOverlap}], may not be generic. 
E.g., we have checked that this mechanism is absent in the adiabatic Grover problem with a transverse field driver Hamiltonian [as in our Eq.~\eqref{eqt:QuantumH}], even though the Grover problem is then equivalent to  a ``giant'' plateau problem: $f(x)=1-\delta_{x,0}$ \footnote{Note that this is not the version of the Grover problem that admits a quantum speedup, as this requires a rank-$1$ driver Hamiltonian \cite{Roland:2002ul}).}.

It is important to note that we do not claim that PHWO problems are always associated with diabatic cascades (see 
the SM for a counterexample); nor do we claim that SVD will always have an advantage over QA.  A simple counterexample to the latter statement comes from the class of cost functions that are convex in Hamming weight space, which have a constant minimum gap \cite{jarret2014fundamental}:
\beq 
\label{eqt:jarret}
f(x) = \begin{cases} 2, &  |x| =0 \\
\abs{x}, & \text{otherwise} \end{cases} \ .
\eeq
We have observed similar diabatic transitions for this problem as for the plateau (not shown), thus obviating tunneling, but find that QA outperforms SVD, as shown in Fig.~\ref{fig:ConvexPGS}.  This occurs because the optimum TTS for QA occurs at a slightly higher optimal annealing time, i.e., there is an advantage to evolving somewhat more slowly, though still far from adiabatically. Thus, this is a case of a ``limited'' quantum speedup \cite{speedup}.

In summary, our work provides a counterargument to the widely made claim that tunneling is needed for a quantum speedup in optimization problems. Which features of Hamiltonians of optimization problems favor diabatic or adiabatic algorithms remains an open question. 

\textit{Acknowledgements.}---%
Special thanks to Ben Reichardt for insightful conversations and for suggesting the plateau problem, and to Bill Kaminsky for inspiring talks~\cite{Kaminsky:2014,Kaminsky:USC-talk-2014}. We also thank Itay Hen, Joshua Job, Iman Marvian, Milad Marvian, and Rolando Somma for useful comments. The computing resources were provided by the USC Center for High Performance Computing and Communications and by the Oak Ridge Leadership Computing Facility at the Oak Ridge National Laboratory, which is supported by the Office of Science of the U.S. Department of Energy under Contract No. DE-AC05-00OR22725.  This work was supported under ARO grant number W911NF-12-1-0523 and ARO MURI Grant No. W911NF-11-1-0268.
%

\newpage

\onecolumngrid

\begin{center}
\textbf{\large{Supplementary Material for}}\\
\textbf{\large{``Diabatic Trumps Adiabatic in Quantum Optimization"}}\\
\end{center}
\twocolumngrid

\section{Derivation of E\lowercase{q}.~(1)}
Equation~\eqref{eqt:TTSopt} is easily derived as follows: the probability of successively failing $k$ times is $\left[1-p_{\mathrm{GS}}(t_f) \right]^k$, so the probability of succeeding at least once after $k$ runs is $1-\left[1-p_{\mathrm{GS}}(t_f) \right]^k$, which we set equal to the desired success probability $p_d$; from here one extracts the number of runs $k$ and multiplies by $t_f$ to get the time-to-solution TTS. Optimizing over $t_f$ yields TTS$_\mathrm{opt}$, which is natural for benchmarking purposes in the sense that it captures the trade-off between repeating the algorithm many times \textit{vs} optimizing the probability of success in a single run. The adiabatic regime might be more attractive if one seeks a theoretical guarantee to have a certain probability of success if the evolution is sufficiently slow. 

\section{(Non-)Locality of PHWO problems} \label{app:locality}
%

Since the PHWO problems, including the plateau, are quantum oracle problems, they can generically not be represented by a local Hamiltonian. For completeness we prove this claim here and also show why the (plain) Hamming weight problem is $1$-local.

Let $r$ be a bit string of length $n$, i.e., $r \in \{0,1\}^n$ and let 
\beq
\sigma^r \equiv \sigma_1^{r_1} \otimes \sigma_2^{r_2} \otimes \dots \otimes \sigma_n^{r_n},
\eeq
with $ \sigma_i^0 \equiv I_i$ and $\sigma_i^1 \equiv \sigma_i^z$.  This forms an orthonormal basis for the vector space of diagonal Hamiltonians. Thus:
\beq
H_P = \sum_{r \in \{0,1\}^n} J_r \sigma^r,
\eeq
with
\bes
\begin{align}
J_r &= \frac{1}{2^n}\Tr(\sigma^r H_P) \\
& = \frac{1}{2^n}\sum_{x \in \{0,1\}^n} f(x) \bra{x}\sigma^r \ket{x} \\
& = \frac{1}{2^n}\sum_{x \in \{0,1\}^n} f(x) (-1)^{x \cdot r}.
\end{align}
\ees
Note that generically $J_r$ will be be non-zero for arbitrary-weight strings $r$, leading to $|r|$-local terms in $H_P$, even as high as $n$-local.

E.g., substituting the plateau Hamiltonian [Eq.~\eqref{eqt:plateau}] into this we obtain:
\begin{eqnarray}
J_r &=& \frac{1}{2^n} \left[ \sum_{\abs{x} \leq l\, \&\, \abs{x} \geq u} \abs{x} (-1)^{x \cdot r} \right. \nonumber \\
&& \left. + (u-1)\sum_{l<\abs{x}<u} (-1)^{x \cdot r} \right].
\end{eqnarray}

On the other hand, if $f(x)=|x|$ (i.e., in the absence of a perturbation), the Hamiltonian is only $1$-local:
\bes
\begin{align}
H_P &= \sum_{x \in \{0,1\}^n} |x| \ket{x}\bra{x} \\
&= \sum_{x_1=0}^1 \dots \sum_{x_n =0}^1 (x_1 + x_2 + \dots + x_n) \ket{x_1}\bra{x_1} \nonumber \\
& \hspace{2cm} \otimes\ket{x_2}\bra{x_2} \otimes \dots \otimes \ket{x_n}\bra{x_n} \\
&= \sum_{k=1}^n \left(x_k \ket{x_k}\bra{x_k} \right) \bigotimes_{j\neq k} \left(\sum_{x_j=0}^1 \ket{x_j}\bra{x_j} \right) \\
&= \sum_{k=1}^n \ket{1}_k \bra{1} \bigotimes_{j\neq k} I_j = \sum_{k=1}^n \ket{1}_k \bra{1}.
\end{align}
\ees
%

\section{Methods} \label{app:Methods}
%

\subsection{Simulated Annealing}
\label{app:SA}
%
SA is a general heuristic solver \cite{kirkpatrick_optimization_1983}, whereby the system is initialized in a high temperature state, i.e., in a random state, and the temperature is slowly lowered while undergoing Monte Carlo dynamics.  Local updates are performed according to the Metropolis rule \cite{1953JChPh..21.1087M,HASTINGS01041970}: a spin is flipped and the change in energy $\Delta E$ associated with the spin flip is calculated.  The flip is accepted with probability $P_{\mathrm{Met}}$:
\beq
P_{\mathrm{Met}} = \min \{ 1 , \exp(-\beta \Delta E) \} \ ,
\eeq
where $\beta$ is the current inverse temperature along the anneal. Note that there could be different schemes governing which spin is to be selected for the update. We consider two such schemes: random spin-selection -- where the next spin to be updated is selected at random; and sequential spin-selection -- where one runs through all of the $n$ spins in a sequence. Random spin-selection (including just updating nearest neighbors) satisfies detailed-balance and thus is guaranteed to converge to the Boltzmann distribution.  Sequential spin-selection does not satisfy strict detailed balance (since the reverse move of sequentially updating in the reverse order never occurs), but it too converges to the Boltzmann distribution \cite{Manousiouthakis:1999}.  In sequential updating, a ``sweep" refers to all the spins having been updated once. In random spin-selection, we define a sweep as the total number of spin updates divided by the total number of spins. When it is possible to parallelize the spin updates, the appropriate metric of time-complexity is the number of sweeps $N_{\mathrm{SW}}$, not the number of spin updates (they differ by a factor of $n$) \cite{speedup}. However, in our problem this parallelization is not possible and hence the appropriate metric is the number of spin updates, and this is what is plotted in Fig.~\ref{fig:TTSScaling}.  After each sweep, the inverse temperature is incremented by an amount $\Delta \beta$ according to an annealing schedule, which we take to be linear, i.e. $\Delta \beta = ( \beta_f - \beta_i ) / (N_{\mathrm{SW}}-1)$.

We can use SA both as an \emph{annealer} and as a \emph{solver} \cite{Hen:2015rt}. In the former, the state at the end of the evolution is the output of the algorithm, and can be thought of as a method to sample from the Boltzmann distribution at a specified temperature. For the latter, we select the state with the lowest energy found along the entire anneal as the output of the algorithm, the better technique if one is only interested in finding the global minimum. We use the latter to maximize the performance of the algorithm.

More details concerning the performance of SA in the context of the problem studied here are presented below.
%
\subsection{Quantum Annealing}\label{app:methods_QD}

Here we consider the most common version of  quantum annealing:
\beq
H(s) = (1-s) \sum_{i=1}^n \frac{1}{2}(\ident_i - \sigma_i^x) + s\sum_{x \in \{0,1\}^n} f(x) \ket{x}\bra{x} \ ,
\eeq
where $s\equiv {t}/{t_f}$ is the dimensionless time parameter and $t_f$ is the total anneal time. The initial state is taken to be $\ket{+}^{\otimes n}$, which is the ground state of $H(0)$.

The initial ground state and the total Hamiltonian are symmetric under qubit permutations (recall that $f(x) = f(|x|)$ for our class of problems).  It then follows that the time-evolved state, at any point in time, will also obey the same symmetry. Therefore the evolution is restricted to the $(n+1)$-dimensional symmetric subspace, a fact that we can take advantage of in our numerical simulations.  This symmetric subspace is spanned by the Dicke states $\ket{S, M}$ with $S = n/2, M = -S, -S+1, \dots ,S$, which satisfy:
\bes
\begin{align}
S^2  \ket{S, M} &= S \left( S+1 \right) \ket{S, M}\\
S^z \ket{S, M} &= M \ket{S, M} \ ,
\end{align}
\ees
where $S^{x,y,z} \equiv \frac{1}{2} \sum_{i=1}^n \sigma_i^{x,y,z}$, $S^2 = \left(S^x \right)^2 + \left(S^y \right)^2 +\left(S^z \right)^2$.
We can denote these states by:
\beq \label{eqt:Dicke}
\ket{w} \equiv \Ket{\frac{n}{2} ,M = \frac{n}{2} - w} = {n \choose w}^{-1/2} \sum_{x: |x| = w} \ket{x}, 
\eeq
where, $w\in\{0,\dots,n\}$.

In this basis the Hamiltonian is tridiagonal, with the following matrix elements:
\bes \begin{align}
\left[H(s)\right]_{w,w+1} = & \left[H(s)\right]_{w+1,w} =\nonumber \\  &-\frac{1}{2} (1-s)  \sqrt{(n-w)(w+1)}, \\
\left[H(s)\right]_{w,w} = &(1-s) \frac{n}{2} + s  f(w).
\end{align} 
\ees
The Schr\"odinger equation with this Hamiltonian can be solved reliably using an adaptive Runge-Kutte Cash-Karp method \cite{Cash:1990:VOR:79505.79507} and the Dormand-Prince method \cite{Dormand198019} (both with orders $4$ and $5$).

If the quantum dynamics is run adiabatically the system remains close to the ground state during the evolution, and an appropriate version of the adiabatic theorem is satisfied. For evolutions with a constant spectral gap for all $s \in [0,1]$, an adiabatic condition of the form
\beq 
t_f \geq \text{const} \sup_{s \in [0,1]} \frac{ \|\partial_s H(s) \|}{\text{Gap}(s)^2}
\label{eq:trad-crit}
\eeq
is often claimed to be sufficient \cite{Messiah:vol2} (however, see discussion after Eq.~(21) in Ref.~\cite{Jansen:07}). In our case $\|\partial_s H(s) \| = \| H(1) - H(0) \|$ is upper-bounded by $n$;  since we are considering a constant gap, the adiabatic algorithm can scale at most linearly by condition \eqref{eq:trad-crit}. This is true for the plateau problems.

We demonstrate in SM-\ref{app:quantumscaling} that the following version of the adiabatic condition, known to hold in the absence of resonant transitions between energy levels \cite{Amin:09}, estimates the scaling we observe very well:
\beq
\max_{s \in [0,1]}  \frac{ |\bra{\eps_0(s) } \partial_s H(s) \ket{\eps_1(s)}|}{\text{Gap}(s)^2} \ll 1,
\label{eqt:AdC}
\eeq
where $\eps_0(s)$ and $\eps_1(s)$ are the instantaneous ground and excited states in the symmetric subspace respectively. To extract $t_f$ from this condition we simply ensure that for a given choice of $t_f$, condition~\eqref{eqt:AdC} holds (recall that $s=t/t_f$).

\subsection{Spin-Vector Dynamics }\label{app:methods_SVD}
%
Starting with the spin-coherent path integral formulation of the quantum dynamics, we can obtain Spin Vector Dynamics (SVD) as the saddle-point approximation (see, for example, p.10 of Ref.~\cite{owerre2015macroscopic} or Refs.~\cite{Smolin,Albash:2014if}). It can be interpreted as a semi-classical limit describing coherent single qubits interacting incoherently.  In this sense, SVD is a well motivated classical limit of the quantum evolution of QA.  SVD describes the evolution of $n$ unit-norm classical vectors under the Lagrangian (in units of $\hbar = 1$):
\beq
\mathcal{L} = i \wich{\Omega(s) | \frac{d}{ds} | \Omega(s)} - t_f \wich{\Omega(s) | H(s) | \Omega(s)},
\eeq
where $\ket{\Omega(s)}$ is a tensor product of $n$ independent spin-coherent states~\cite{arecchi1972atomic}:
\beq
\ket{\Omega(s)} = \bigotimes_{i=1}^n \left[ \cos\left( \frac{\theta_i(s)}{2} \right) \ket{0}_i + \sin \left( \frac{\theta_i(s)}{2} \right) e^{i \varphi_i(s)} \ket{1}_i \right].
\eeq
We can define an \emph{effective semiclassical potential} associated with this Lagrangian:
\begin{align}
\label{eqt:bigvsc}
&V_\mathrm{SC}(\{\theta_i\},\{\varphi_i\},s) \equiv \wich{\Omega(s) | H(s) | \Omega(s)} \nonumber \\
&=(1-s) \sum_{i=1}^n \frac{1}{2} \left(1- \cos\varphi_i(s) \sin\theta_i(s) \right) \nonumber\\
&+ s \sum_{x \in \{0,1\}^n} f(x) \prod_{j:x_j=0} \cos^2\left(\frac{\theta_j(s)}{2}\right)\prod_{j:x_j=1} \sin^2\left(\frac{\theta_j(s)}{2}\right),
\end{align}
with the probability of finding the all-zero state at the end of the evolution (which is the ground state in our case), as $\prod_{i=1}^n \cos^2\left(\frac{\theta_i(1)}{2}\right)$.
The quantum Hamiltonian obeys qubit permutation symmetry: $P H P = H$ where $P$ is a unitary operator that performs an arbitrary permutation of the qubits. This implies that our classical Lagrangian obeys the same symmetry:
\begin{eqnarray}
\mathcal{L}' &\equiv&  i \bra{\Omega(s)} P \frac{d}{ds} P \ket{\Omega(s)} - t_f \bra{\Omega(s)} P H(s) P \ket{\Omega(s)} \nonumber \\
&=&   i \bra{\Omega(s)}  \frac{d}{ds} \ket{\Omega(s)} - t_f \bra{\Omega(s)}  H(s) \ket{\Omega(s)} = \mathcal{L},  \nonumber\\
\end{eqnarray}
where the derivative operator is trivially permutation symmetric. Therefore, the Euler-Lagrange equations of motion derived from this action will be identical for all spins. Thus, if we have symmetric initial conditions, i.e., $(\theta_i(0),\varphi_i(0)) = (\theta_j(0),\varphi_j(0)) \ \forall i,j$, then the time evolved state will also be symmetric:
\beq
(\theta_i(s),\varphi_i(s)) = (\theta_j(s),\varphi_j(s)) \ \forall i,j \ \forall s\in[0,1] \ .
\eeq 
As we show below, under the assumption of a permutation-invariant initial condition we only need to solve two (instead of $2n$) semiclassical equations of motion:
\bes 
\label{eqt:symsvdeoms} 
\begin{align}
\frac{n}{2} \sin\theta(s) \theta'(s) - t_f  \partial_{\varphi(s)} V_\mathrm{SC}^\mathrm{sym} (\theta(s),\varphi(s),s)  = 0  \ , \\
- \frac{n}{2} \sin \theta(s) \varphi'(s) - t_f  \partial_{\theta(s)} V_\mathrm{SC}^\mathrm{sym} (\theta(s),\varphi(s),s) = 0 \ ,
\end{align} \ees
where we have defined the symmetric effective potential $V_\mathrm{SC}^\mathrm{sym}$ as:
\begin{align}
\label{eqt:symvsc}
&V_\mathrm{SC}^\mathrm{sym} (\theta(s),\varphi(s),s) \equiv \wich{\Omega^\mathrm{sym}(s) | H(s) | \Omega^\mathrm{sym}(s)} \nonumber \\
&=(1-s) \frac{n}{2} \left(1- \cos\varphi(s) \sin\theta(s) \right) \nonumber \\
 &+ s \sum_{w=0}^n f(w) \binom{n}{w} \sin^{2w}\left(\frac{\theta(s)}{2}\right) \cos^{2(n-w)}\left(\frac{\theta(s)}{2}\right),
\end{align}
and $\ket{\Omega^\mathrm{sym}(s)}$ is simply $\ket{\Omega(s)}$ with all the $\theta$'s and $\varphi$'s set equal. Note that in the main text [see Eq.~\eqref{eq:vsc}], we slightly abuse notation for simplicity, and use $V_\mathrm{SC}$ instead of $V_\mathrm{SC}^\mathrm{sym}$.
The probability of finding the all-zero bit string at the end of the evolution is accordingly given by $\cos^{2n}(\theta(1)/2)$. We would have arrived at the same equations of motion had we used the symmetric spin coherent state in our path integral derivation, but that would have been an artificial restriction. In our present derivation the symmetry of the dynamics naturally imposes this restriction.

Note that the object in Eq.~\eqref{eqt:bigvsc} involves a sum over all $2^n$ bit-strings and is thus exponentially hard to compute; on the other hand, the object in Eq.~\eqref{eqt:symvsc} only involves a sum over $n$ terms and is thus easy to compute. Therefore, just as in the quantum case---where due to permutation symmetry the quantum evolution is restricted to the $n+1$ dimensional subspace of symmetric states instead of the full $2^n$-dimensional Hilbert space---given knowledge of the symmetry of the problem we can efficiently compute the SVD potential and efficiently solve the SVD equations of motion. 

We also remark that the computation of the potential in Eq.~\eqref{eqt:bigvsc} is significantly simplified if our cost function, $f(x)$, is given in terms of a local Hamiltonian. For example, if $H(1) = \sum_{i,j} J_{ij} \sigma_i^z \sigma_j^z$, then:
\beq
V_\mathrm{SC}(\{\theta_i\},\{\varphi_i\},1) = \sum_{i,j} J_{ij} \cos\theta_i \cos\theta_j \ ,
\eeq
which is easy to compute if is a $\mathrm{poly}(n)$ number of terms.

Let us now derive the symmetric SVD equations of motion~\eqref{eqt:symsvdeoms}. Without any restriction to symmetric spin-coherent states, the SVD equations of motion, for the pair $\theta_i,\varphi_i$, read:
\bes 
\label{eqt:symsvdeoms2} 
\begin{align}
\frac{1}{2} \sin\theta_i(s) \theta_i'(s) - t_f  \partial_{\varphi_i(s)} V_\mathrm{SC} (\{\theta_i\},\{\varphi_i\},s)  = 0 \ , \\
- \frac{1}{2} \sin \theta_i(s) \varphi_i'(s) - t_f  \partial_{\theta_i(s)} V_\mathrm{SC}(\{\theta_i\},\{\varphi_i\},s) = 0 \ .
\end{align} 
\ees
As can be seen by comparing Eqs.~\eqref{eqt:symsvdeoms} and \eqref{eqt:symsvdeoms2}, it is sufficient to show that: 
\beq
\frac{\partial}{\partial \theta_i} V_\mathrm{SC} \bigg|_{\theta_j = \theta, \varphi_j = \varphi \ \forall j} = \frac{1}{n} \frac{\partial}{\partial \theta} V_\mathrm{SC}^\mathrm{sym},
\eeq
and an analogous statement holding for derivatives with respect to $\varphi$. This claim is easily seen to hold true for the term multiplying $(1-s)$ in Eq.~\eqref{eqt:bigvsc}:
\begin{align}
& \frac{\partial}{\partial \theta_i} \sum_{i=1}^n \frac{1}{2} \left(1- \cos\varphi_i(s) \sin\theta_i(s) \right)\bigg|_{\theta_j = \theta, \varphi_j = \varphi \ \forall j} \notag \\
& = \frac{\partial}{\partial \theta} \frac{1}{2} \left(1- \cos\varphi(s) \sin\theta(s) \right) \notag \\
& = \frac{1}{n} \frac{\partial}{\partial \theta}  V_\mathrm{SC}^\mathrm{sym} (\theta,\phi,s=0)\ ,
\end{align}
where in the last line we used Eq.~\eqref{eqt:symvsc}. Next we focus on the term multiplying $s$ in Eq.~\eqref{eqt:bigvsc}. This term has no $\varphi$ dependence and thus we only consider the $\theta$ derivatives. First note that
\begin{align}
&\frac{\partial}{\partial \theta_i} V_\mathrm{SC}(\{\theta_i\}, \{\varphi_i\},s=1) = \nonumber \\
&\sum_{x\in \{0,1\}^n } f(x) \prod_{j:x_j=0} \cos^2\left(\frac{\theta_j}{2}\right)\prod_{j:x_j=1} \sin^2\left(\frac{\theta_j}{2}\right) \nonumber \\
&\times \left[ -\delta_{x_i,0} \sec^2\left(\frac{\theta_i}{2}\right) + \delta_{x_i,1}\csc^2\left(\frac{\theta_i}{2}\right) \right] \frac{\sin\theta_i}{2}.
\end{align}
Now, we set all the $\theta_i$'s equal. Let us define $p(\theta) \equiv \sin^2\left(\frac{\theta}{2}\right)$. Using this and the fact that $f$ is only a function of the Hamming weight (which is equivalent to the qubit permutation symmetry), we can rewrite the last expression, after a few steps of algebra, as:
\begin{align}
\label{eq:32}
&\sum_{w=0}^n f(w) p^{w-1} (1-p)^{n-w-1} \partial_\theta p \nonumber \\
& \hspace{3cm} \times \left[ (1-p)\binom{n-1}{w-1} - p \binom{n-1}{w} \right] \nonumber \\
&= \sum_{w=0}^n f(w) p^{w-1} (1-p)^{n-w-1} \partial_\theta p  \left[ \frac{1}{n} \binom{n}{w} (w - np) \right] \notag \\
& = \frac{1}{n} \frac{\partial}{\partial \theta} V_\mathrm{SC}^\mathrm{sym}(\theta, \varphi, s=1)\ .
\end{align}
%

\subsection{Simulated Quantum Annealing}
Although not discussed in the main text, in SM-\ref{app:quantumscaling} we use an alternative method to simulated annealing, namely simulated quantum annealing (SQA, or Path Integral Monte Carlo along the Quantum Annealing schedule) \cite{sqa1,Santoro}. This is an annealing algorithm based on discrete-time path-integral quantum Monte Carlo simulations of the transverse field Ising model using Monte Carlo dynamics.  At a given time $t$ along the anneal, the Monte Carlo dynamics samples from the Gibbs distribution defined by the action:
\beq
S[\mu] =  \Delta(t) \sum_\tau H_{\mathrm{P}} (\mu_{:,\tau}) -J_{\perp}(t)  \sum_{i,\tau} \mu_{i,\tau} \mu_{i,\tau+1} 
\eeq
where $\Delta(t) = \beta B(t)/ N_{\tau}$ is the spacing along the time-like direction, $J_{\perp} = -0.5 \ln (\tanh(A(t)/2))$ is the ferromagnetic spin-spin coupling along the time-like direction, and $\mu$ denotes a spin configuration with a space-like direction (the original problem direction, indexed by $i$) and a time-like direction (indexed by $\tau$).
For our spin updates, we perform Wolff cluster updates \cite{PhysRevLett.62.361} along the imaginary-time direction only.  For each space-like slice, a random spin along the time-like direction is picked. The neighbors of this spin are added to the cluster (assuming they are
parallel) with probability 
\beq
P = 1 - \exp(-2 J_{\perp})
\eeq
When all neighbors of the spin have been checked, the newly added spins are checked.  When all spins in the cluster have had their neighbors along the time-like direction tested, the cluster is flipped according to the Metropolis probability using the space-like change in energy associated with flipping the cluster.  A single sweep involves attempting to update a single cluster on each space-like slice.

We can use SQA both as an \emph{annealer} and as a \emph{solver} \cite{Hen:2015rt}.  In the former, we randomly pick one of the states on the Trotter slices at the end of the evolution as the output of the algorithm, while for the latter, we pick the state with the lowest energy found along the entire anneal as the output of the algorithm.  We use the latter to maximize the performance of the algorithm.

\section{Review of the Hamming weight problem and Reichardt's bound for PHWO problems} \label{app:review}
Here we closely follow Ref.~\cite{Reichardt:2004}.

\subsection{The Hamming weight problem}

We review the analysis within QA of the minimization of the Hamming weight function 
$f_\mathrm{HW} (x)=\abs{x}$, which counts the number of $1$'s in the bit string $x$. This problem is of course trivial, and the analysis given here is done in preparation for the perturbed problem.

For the adiabatic algorithm, we start with the driver Hamiltonian,
\beq \label{eq:HD}
H_D = \frac{1}{2} \sum_{i=1}^n \left( \ident_i - \sigma^x_i \right) = \sum_{i=1}^n \ket{-}_i\bra{-} \ ,
\eeq
which has $\ket{+}^{\otimes n}$ as the ground state.

The final Hamiltonian for the cost function $f_\mathrm{HW} (x)$ is
\beq
H_P = \frac{1}{2} \sum_{i=1}^n \left( \ident_i - \sigma^z_i \right) = \sum_{i=1}^n \ket{1}_i\bra{1} \ ,
\eeq
which  has $\ket{0}^{\otimes n}$ as the ground state.

We interpolate linearly between $H_D$ and $H_P$:
\begin{align}
H(s) &= (1-s)H_D + sH_P; \quad s\in[0,1] \\
&= \sum_{i=1}^n \frac{1}{2}\begin{pmatrix}  1-s & -(1-s) \\ -(1-s) & 1-s\end{pmatrix}_i + \begin{pmatrix}  0 & 0 \\ 0 & s\end{pmatrix}_i,\\
&= \sum_{i=1}^n \frac{1}{2}\begin{pmatrix}  1-s & -(1-s) \\ -(1-s) & 1+s\end{pmatrix}_i \ .
\end{align}
Since there are no interactions between the qubits, this problem can be solved exactly by diagonalizing the Hamiltonian on each qubit separately. For each term, we have the energy eigenvalues $E_{\pm}(s)$,
\beq
E_{\pm}(s) = \frac{1}{2} (1 \pm \Delta(s)); \quad \Delta(s) \equiv \sqrt{1-2s+2s^2},
\eeq
and associated eigenvectors,
\beq
\ket{v_{\pm}(s)} = \frac{1}{\sqrt{2\Delta (\Delta \mp s)}} \left[ \mp(\Delta \mp s) \ket{0} + (1-s) \ket{1} \right] \ .
\eeq
The ground state of $H(s)$ is $\ket{v_{-}(s)}^{\otimes n}$. The gap is given by,
\bes
\begin{align}
\text{Gap}[H(s)] &= H(s) \ket{v_{+}(s)}\otimes\ket{v_{-}(s)}^{\otimes (n-1)} \nonumber \\
& \hspace{3cm} - H(s)\ket{v_{-}(s)}^{\otimes n} \\
&= E_+ + (n-1)E_- - nE_- \\
&= E_+ - E_- \\
&=\Delta(s) \ .
\end{align}
\ees
The gap is minimized at $s=\frac{1}{2}$ with minimum value $\Delta(\frac{1}{2})=\frac{1}{\sqrt{2}}$. The minimum gap is independent of $n$ and hence does not scale with problem size. Therefore the adiabatic run time is given by,
\beq
t_f = \mathcal{O}\left( \frac{\|H\|}{\Delta^2} \right)= \mathcal{O}(n)\ ,
\eeq
where the $n$-dependence is solely due to $\| H\|$ (see SM-\ref{app:methods_QD}).

It also useful to consider the form of $\ket{v_{-}(s)}^{\otimes n}$. We can write,
\begin{eqnarray}
\ket{v_{-}(s)}^{\otimes n} &=& \frac{1}{[2\Delta (\Delta + s)]^{\frac{n}{2}}} \times \nonumber \\
&& \hspace{-0.75cm} \sum_{x \in \{0,1\}^n} (1-s)^{\abs{x}} (\Delta(s) + s)^{n-\abs{x}} \ket{x} \ .
\end{eqnarray}
If we measure in the computational basis, the probability of getting outcome $x$ is determined by $\abs{x}$:
\beq
\text{Pr}[x](s) = \abs{\wich{v_{-}^{\otimes n}|x}}^2 = q(s)^{\abs{x}} (1-q(s))^{n-\abs{x}} \ ,
\label{eq:Pr[x](s)}
\eeq
where
\beq
q(s) \equiv \frac{(1-s)^2}{[2\Delta (\Delta + s)]} \ .
\eeq

\subsection{Reichardt's bound for PHWO problems} 
\label{app:ReichardtBound}
Here we review Reichardt's derivation of the gap lower-bound for general PHWO  problems, but provide additional details not found in the original proof~\cite{Reichardt:2004}. 

We use the same initial Hamiltonian [Eq.~\eqref{eq:HD}] and linear interpolation schedule as before, $\tilde{H}(s) = (1-s) H_D + s \tilde{H}_P$, and choose the final Hamiltonian to be
\beq 
\tilde{H}_P = \sum_{x\in \{0,1\}^n} \tilde{f} (x) \ket{x}\bra{x}\ ,
\eeq
where
\beq
\tilde{f}(x) = \begin{cases} \abs{x} + p(x) & l<\abs{x}<u \ ,\\
\abs{x} & \text{elsewhere} \end{cases} \ , 
\eeq
where $p(x) \geq 0$ is the perturbation.  Note that here we have \emph{not} assumed that the perturbation, $p(x)$, respects qubit permutation symmetry. 

We wish to bound the minimum gap of $\tilde{H}(s)$. Unlike the Hamming weight problem $H(s)$, this problem is no longer non-interacting. Define
\beq
h_k \equiv \max_{\abs{x}=k} p(x); \quad h \equiv \max_k h_k = \max_x p(x).
\eeq

\begin{lemma}[\cite{Reichardt:2004}]
\label{lem:bound}Let $u = \mathcal{O}(l)$ and let $E_0(s)$ and $\tilde{E}_0(s)$ be the ground state energies of $H(s)$ and $\tilde{H}(s)$, respectively. Then $\tilde{E}_0 (s) \leq E_0(s) + \mathcal{O}( h \frac{u-l}{\sqrt{l}})$.
\end{lemma}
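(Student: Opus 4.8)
\emph{Proof strategy.} Since the lemma only asks for an \emph{upper} bound on $\tilde E_0(s)$, the plan is to invoke the variational principle with a well-chosen trial state, and the natural choice is the exact ground state of the \emph{unperturbed} interpolating Hamiltonian. Set $\ket{\psi(s)}=\ket{v_-(s)}^{\otimes n}$, which is permutation-symmetric and obeys $H(s)\ket{\psi(s)}=E_0(s)\ket{\psi(s)}$. Using $\tilde H(s)-H(s)=s(\tilde H_P-H_P)=s\sum_x p(x)\ketbra{x}{x}$,
\beq
\tilde E_0(s)\le\bra{\psi(s)}\tilde H(s)\ket{\psi(s)}=E_0(s)+s\sum_{x}p(x)\,\abs{\braket{\psi(s)}{x}}^2 .
\eeq
Because $\abs{\braket{\psi(s)}{x}}^2$ depends on $x$ only through $\abs{x}$ (this is a feature of the unperturbed, factorized problem, so $p$ need not be permutation-symmetric), and $0\le p(x)\le h$ with $p$ supported on $l<\abs{x}<u$, the perturbative term is at most $s\,h\sum_{k=l+1}^{u-1}\Pr[\abs{x}=k]$, where, from the unperturbed analysis, $\Pr[\abs{x}=k]=\binom{n}{k}q(s)^k(1-q(s))^{n-k}$ is a binomial weight with $q(s)=(1-s)^2/[2\Delta(\Delta+s)]$. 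Thus the problem reduces to bounding a short window of a binomial distribution, uniformly in $s$.

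The core estimate is $\sum_{k=l+1}^{u-1}\binom{n}{k}q^k(1-q)^{n-k}=\mO\!\left((u-l)/\sqrt l\right)$ with constants independent of $n$ and of $s$ (write $q\equiv q(s)$). I would establish this from two elementary facts. First, a short computation shows $q(s)\le\tfrac12$ for all $s\in[0,1]$: the inequality $(1-s)^2\le\Delta(\Delta+s)$ rearranges to $0\le s(s+\Delta)$, which holds since $s\ge0$. Second, a Stirling estimate gives, uniformly for $q\in(0,\tfrac12]$ and $1\le k\le n/2$,
\beq
\binom{n}{k}q^k(1-q)^{n-k}\le\binom{n}{k}(k/n)^k(1-k/n)^{n-k}\le\frac{C}{\sqrt k},
\eeq
the first inequality because $q\mapsto q^k(1-q)^{n-k}$ is maximized at $q=k/n\le\tfrac12$, and the second from $\binom{n}{k}\le C'\sqrt{n/[k(n-k)]}\,(n/k)^k(n/(n-k))^{n-k}$ combined with $1-k/n\ge\tfrac12$. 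Summing over the at most $u-l-1$ integers $k\in(l,u)$, each term bounded by $C/\sqrt{l}$, gives $\sum_{k=l+1}^{u-1}\Pr[\abs{x}=k]\le C(u-l)/\sqrt l$. Plugging back and using $s\le1$ yields $\tilde E_0(s)\le E_0(s)+Ch(u-l)/\sqrt l$, i.e.\ $\tilde E_0(s)\le E_0(s)+\mO\!\left(h(u-l)/\sqrt l\right)$. The hypothesis $u=\mO(l)$ enters only to guarantee the whole window lies in $\{k\le n/2\}$ (so the pmf bound applies) and to let $\sqrt l$ stand in for $\sqrt k$ uniformly across the window.

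The one genuinely substantive point — everything else is bookkeeping — is the uniform-in-$q$ pmf bound of the second step: it must decay like $1/\sqrt k$ \emph{and} hold for all $q\in(0,\tfrac12]$ simultaneously, so that in particular it holds along the schedule $q=q(s)$ without any further optimization over $s$. Intuitively the binomial mass concentrates near weight $nq$, which exceeds $u$ when $q\approx\tfrac12$ (small $s$) and lies below $l$ when $q$ is small (large $s$); the worst case is when $nq$ falls inside or just below $(l,u)$, and there the bound $\mO(1/\sqrt{nq})=\mO(1/\sqrt l)$ is precisely tight enough. I would also verify the harmless edge cases ($l\ge1$ so $\sqrt l$ is meaningful; $u$ not comparable to $n$, excluded by $u=\mO(l)$), and double-check that $\ket{v_-(s)}^{\otimes n}$ indeed gives exactly $E_0(s)$ rather than merely an upper bound, which is immediate since the unperturbed problem factorizes over qubits.
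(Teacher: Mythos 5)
Your proof is correct in substance, and its first half is exactly the paper's: both arguments use the variational principle with the unperturbed product ground state $\ket{v_-(s)}^{\otimes n}$ as trial state, reducing the lemma to bounding $s\sum_{l<k<u}\binom{n}{k}q^k(1-q)^{n-k}$ (the paper keeps the refinement $h_k=\max_{\abs{x}=k}p(x)$ before majorizing by $h$, but this is cosmetic). Where you genuinely diverge is the binomial estimate. The paper invokes the de Moivre--Laplace approximation, replaces the partial sum by a Gaussian integral with mean $\mu=nq$ and width $\sigma=\sqrt{nq(1-q)}$, splits into cases according to whether $\mu$ lies below, inside, or above the window $(l,u)$, and in the worst case bounds the integral by a rectangle of height $1/\sqrt{2\pi}$ and width $(u-l)/\sigma$, using $l<\mu$ to get $1/\sigma=\mO(1/\sqrt{l})$. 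You instead bound each pmf term uniformly in $q$ by its value at the mode, $\binom{n}{k}(k/n)^k(1-k/n)^{n-k}=\mO(1/\sqrt{k})$ via Stirling, after checking $q(s)\le\tfrac12$; this is more elementary and arguably tighter in rigor, since it avoids the paper's admittedly heuristic Gaussian step (the paper itself flags that the approximation needs $q$ and $1-q$ bounded away from zero) and needs no case analysis on the location of the mean. One small repair is needed, however: your claim that $u=\mO(l)$ guarantees the window lies in $\{k\le n/2\}$ is not correct (e.g.\ $l=0.9n$, $u=0.95n$ satisfies $u=\mO(l)$), and your mode argument as written requires $k/n\le\tfrac12$ so that the maximizer of $q\mapsto q^k(1-q)^{n-k}$ lies in the admissible range $q\in(0,\tfrac12]$. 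The fix is immediate: for $k>n/2$ and $q\le\tfrac12$ the map is increasing on $(0,\tfrac12]$, so the term is at most $\binom{n}{k}2^{-n}=\mO(1/\sqrt{n})\le\mO(1/\sqrt{k})$, and the $\mO((u-l)/\sqrt{l})$ conclusion is unaffected. (Likewise $k>l$ alone already gives $1/\sqrt{k}\le 1/\sqrt{l}$, so $u=\mO(l)$ is not doing the work you attribute to it; it is inherited from Reichardt's formulation rather than essential to this upper bound.)
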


\begin{proof}
First note that
\beq
\tilde{H}(s)- H(s) = s \sum_{x: l<\abs{x}<u} p(x) \ket{x}\bra{x}\ .
\eeq

Below, we suppress the $s$ dependence of all the terms for notational simplicity.  We know that $E_0 = \wich{v_{-}^{\otimes n}|H|v_{-}^{\otimes n}}$. Using this,
\bes
\begin{align}
\wich{\tilde{E}_0|\tilde{H}|\tilde{E}_0} &\leq \wich{\psi | \tilde{H} | \psi} \quad \forall \ket{\psi}\in\mathcal{H}. \\
\implies \tilde{E}_0 - E_0  &\leq \wich{v_{-}^{\otimes n}|\tilde{H}|v_{-}^{\otimes n}} - E_0 \\
&\leq \wich{v_{-}^{\otimes n}|\tilde{H}-H|v_{-}^{\otimes n}} \label{eq:perttheory}\\
&=s \sum_{x: l<\abs{x}<u} p(x) \abs{\wich{v_{-}^{\otimes n}|x}}^2 \\
&=s \sum_{x: l<\abs{x}<u} p(x) q^{\abs{x}}(1-q)^{n-\abs{x}} \\
&\leq \sum_{k:l<k<u} h_k {n \choose k}  q^k (1-q)^{n-k}, \label{eq:hk}
\end{align}
\ees
where ${n \choose k}$ is the number of strings with Hamming weight $k$, and we used Eq.~\eqref{eq:Pr[x](s)}. 

Consider the partial binomial sum (dropping the $h_k$'s),
\beq
\sum_{k:l<k<u} {n \choose k}  q^k (1-q)^{n-k}.
\eeq
Using the fact that the binomial is well-approximated by the Gaussian in the large $n$ limit (note that this approximation requires that $q(s)$ and $1-q(s)$ not be too close to zero), we can write:
\begin{align}
\label{eq:integralbound}
& \sum_{k:l<k<u} {n \choose k}  q^k (1-q)^{n-k}  \approx \int_l^u d\xi \ \frac{1}{\sqrt{2 \pi}\sigma} e^{-\frac{(\xi - \mu)^2}{2 \sigma^2}} \notag \\
&\quad = \frac{1}{\sigma} \int_l^u d\xi\ \phi\left(\frac{\xi-\mu}{\sigma}\right) = \int_{(l-\mu)/\sigma}^{(u-\mu)/\sigma} dt\ \phi(t)\ , 
\end{align}
where $\quad \mu \equiv nq$, $\sigma \equiv \sqrt{nq(1-q)}$ and $\phi(t) \equiv \frac{e^{-t^2/2}}{\sqrt{2\pi}}$. Note that $\sigma$ and $\mu$ depend on $n$, and also on $s$ via $q(s)$. The parameters $l$ and $u$ are specified by the problem Hamiltonian, and are therefore allowed to depend on $n$ as long as $l(n)<u(n)<n$ is satisfied for all $n$.

Let us define:
\beq
\label{eq:intbound}
B(s,n,l(n),u(n)) \equiv \int_{(l(n)-\mu(n,s))/\sigma(n,s)}^{(u(n)-\mu(n,s))/\sigma(n,s)} dt\ \frac{e^{-t^2}/2}{\sqrt{2\pi}}.
\eeq
We seek an upper bound on this function.  We observe that $q(s)$ decreases monotonically from $\frac{1}{2}$ to $0$ as $s$ goes from $0$ to $1$.   Thus, the mean of the Gaussian $\mu(n,s) = n q(s)$ decreases from $\frac{n}{2}$ to $0$. 
Depending on the values of $l(n)$, $u(n)$ and $\mu(n,s)$, we thus have three possibilities: (i) $l(n)<\mu(n,s)<u(n)$, (ii) $\mu(n,s)<l(n)<u(n)$, and (iii) $l(n)<u(n)<\mu(n,s)$. Note that (ii) and (iii) are cases where the integral runs over the tails of the Gaussian and so the integral is exponentially small. We focus on (i), as this induces the maximum values of the integral.
In this case the lower limit of the integral Eq.~\eqref{eq:intbound} is negative, while the upper limit is positive. Thus, the integral runs through the center of the standard Gaussian, and we can upper-bound the value of the integral by the area of the rectangle of width $\frac{u(n)-l(n)}{\sigma(n,s)}$ and height $\frac{1}{\sqrt{2\pi}}$. Hence
\bes
\begin{align}
B(s,n,l(n),u(n)) &\leq \frac{1}{\sqrt{2\pi}} \frac{u(n)-l(n)}{\sigma(n,s)}, \\
&= \frac{1}{\sqrt{2\pi}} \frac{u(n)-l(n)}{\sqrt{\mu(n) (1-q(s))}}, \\
&\leq \frac{1}{\sqrt{2\pi}} \frac{u(n)-l(n)}{\sqrt{l(n) (1-q(s))}},
\end{align}
\ees
where we have used the fact that $l(n) < \mu(n,s) = nq(s)$.

Thus, we obtain the bound:
\beq
\tilde{E}_0 - E_0 \leq \mathcal{O}\left(h \frac{u-l}{\sqrt{l}}\right). \label{eq:bound}
\eeq
\end{proof}

\begin{lemma}[\cite{Reichardt:2004}] 
\label{lem:spec} 
If $\tilde{H} - H$ is non-negative, then the spectrum of $\tilde{H}$ lies above the spectrum of $H$. That is, $\tilde{E}_j \geq E_j$ for all $j$, where $\tilde{E}_j$ and $E_j$ denote the $j$th largest eigenvalue of $\tilde{H}$ and $H$, respectively.
\end{lemma}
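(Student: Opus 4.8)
The plan is to reduce the claim to the variational (Courant--Fischer, or min--max) characterization of the eigenvalues of a Hermitian operator, so that the hypothesis that $\tilde H - H$ is non-negative enters through a single pointwise inequality between quadratic forms. Since the Hilbert space (or the permutation-symmetric subspace to which the dynamics is confined) is finite-dimensional, for any Hermitian $A$ one may list its eigenvalues in non-increasing order, $E_1(A)\ge E_2(A)\ge\cdots\ge E_N(A)$, and recall that
\beq
E_j(A) = \max_{\substack{V\subseteq\mathcal H\\ \dim V=j}}\ \min_{\substack{\ket{\psi}\in V\\ \ket{\psi}\ne 0}}\ \frac{\bra{\psi}A\ket{\psi}}{\braket{\psi}{\psi}}\ .
\eeq

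First I would note that, because $\tilde H - H \ge 0$, for \emph{every} nonzero $\ket{\psi}\in\mathcal H$ one has $\bra{\psi}\tilde H\ket{\psi} = \bra{\psi}H\ket{\psi} + \bra{\psi}(\tilde H-H)\ket{\psi}\ge\bra{\psi}H\ket{\psi}$, hence $\bra{\psi}\tilde H\ket{\psi}/\braket{\psi}{\psi}\ge\bra{\psi}H\ket{\psi}/\braket{\psi}{\psi}$. Next, fixing $j$ and an arbitrary $j$-dimensional subspace $V$, this pointwise bound holds in particular at the minimizer of the Rayleigh quotient of $H$ over $V$, so $\min_{\ket{\psi}\in V}\bra{\psi}\tilde H\ket{\psi}/\braket{\psi}{\psi}\ge\min_{\ket{\psi}\in V}\bra{\psi}H\ket{\psi}/\braket{\psi}{\psi}$. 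Taking the maximum over all $j$-dimensional $V$ on both sides and applying the min--max formula above gives $E_j(\tilde H)\ge E_j(H)$ for every $j$, which is the assertion. Equivalently, one may simply invoke Weyl's monotonicity theorem, e.g.\ Ref.~\cite{horn2012matrix}.

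I do not expect a genuine obstacle here, since the content is standard linear algebra; the only points requiring care are bookkeeping — the ``$j$th largest'' ordering must be used consistently on both sides, and the argument uses nothing about $\tilde H - H$ beyond positive semidefiniteness, so Lemma~\ref{lem:bound} is not needed for this step. The intended payoff is Eq.~\eqref{eqt:Reichbound}: applying this lemma to the first excited level, together with the ground-state estimate $\tilde E_0\le E_0+\mathcal{O}\!\left(h\,\frac{u-l}{\sqrt l}\right)$ of Lemma~\ref{lem:bound}, yields $\text{Gap}[\tilde H(s)]\ge\text{Gap}[H(s)]-\mathcal{O}\!\left(h\,\frac{u-l}{\sqrt l}\right)$.
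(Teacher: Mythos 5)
Your proposal is correct and matches the paper's approach: the paper simply remarks that the lemma follows from a straightforward application of the Courant--Fischer min--max theorem (citing Ref.~\cite{horn2012matrix}), and your argument is exactly that application spelled out, with the pointwise Rayleigh-quotient inequality from $\tilde H - H \ge 0$ fed through the min--max characterization. Nothing further is needed.
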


This can be proved by a straightforward application of the Courant-Fischer min-max theorem (see, for example, Ref.~\cite{horn2012matrix}).
%
%

Combining these lemmas results in the desired bound on the gap:
\bes
\begin{align}
\text{Gap}[\tilde{H}(s)] &= \tilde{E}_1 - \tilde{E}_0, \\
&\geq E_1 - \tilde{E}_0,  \label{eq:lem1}\\
&= E_1 - E_0 - (\tilde{E}_0 - E_0), \\
&\geq \Delta - \mathcal{O}\left(h \frac{u-l}{\sqrt{l}}\right) \label{eq:lem2},
\end{align}
\ees
where in Eq.~\eqref{eq:lem1} we used Lemma \ref{lem:spec} and in Eq.~\eqref{eq:lem2}, we used Lemma \ref{lem:bound}.

Now, if we choose a parameter regime for the perturbation such that $h \frac{u-l}{\sqrt{l}} = o(1)$, then the perturbed problem maintains a constant gap. For example, if $l = \Theta(n)$ and $h(u-l) = \mathcal{O}(n^{1/2-\epsilon})$, for any $\epsilon > 0$, then the gap is constant as $n \to \infty$.

\section{Adiabatic scaling} \label{app:quantumscaling}
%
In order to study the adiabatic scaling, we consider the minimum time $\tau_{0}$ required to reach the ground state with some probability $p_{\mathrm{ThC}}$, where we choose $p_{\mathrm{ThC}}$ to ensure that we are exploring a regime close to adiabaticity for QA.   We call this benchmark metric the ``threshold criterion,'' and set  $p_{\mathrm{ThC}}=0.9$.  As seen in Fig.~\ref{fig:ThC}, QA scales polynomially, approximately as $n^{0.5}$.  It is also clear that the adiabatic criterion given by Eq.~\eqref{eqt:AdC} provides an excellent proxy for the scaling of QA. 

In light of a spate of recent negative results concerning the possibility of an advantage of SQA over SA (e.g., Ref.~\cite{Heim:2014jf}), it is remarkable, and of independent interest, that SQA scales better than SA for the plateau problem.
\begin{figure}[h] 
\includegraphics[width=0.8 \columnwidth]{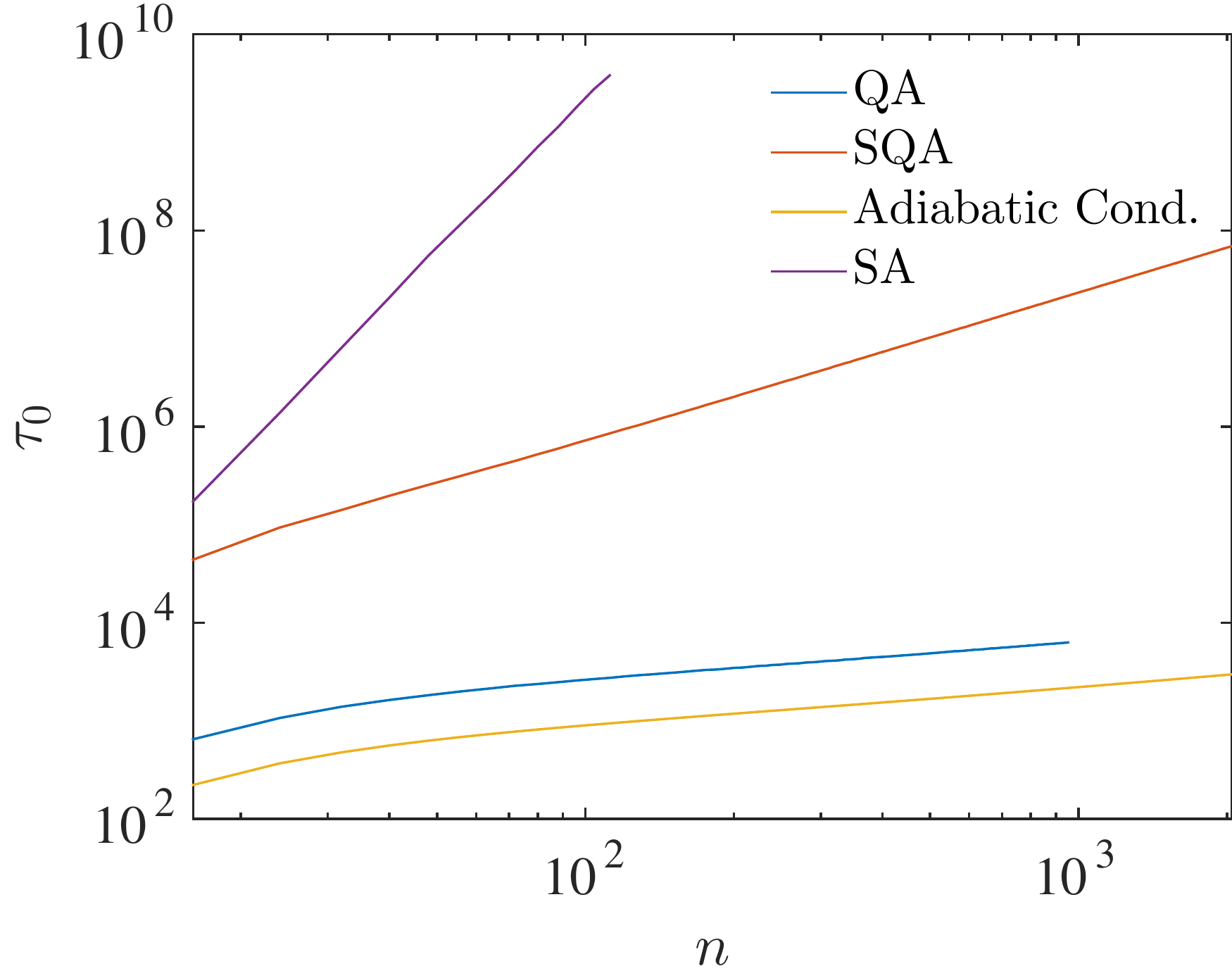}    
 \caption{Log-log plot of the scaling of the time to reach a success probability of $0.9$, as a function of system size $n$ and $u =6$, for QA, SQA ($\beta = 30$, $N_{\tau} = 64$) and SA ($\beta_f = 20$).  The time for SQA and SA is measured in single-spin updates.  We also show the scaling of the adiabatic condition as defined in Eq.~\eqref{eqt:AdC} since it shows the same scaling as QA but can be calculated for larger spin systems. QA and the adiabatic condition scale approximately as $n^{0.5}$.  SQA scales more favorably ($\sim n^{1.5}$) than SA ($\sim n^{5}$).} 
\label{fig:ThC} 
\end{figure}
%

\section{Analysis of simulated annealing using random spin selection} \label{app:SArandom}
%
Here we analyze SA for the plateau and the Hamming weight problems. We consider a version of SA with random spin-selection as the rule that generates candidates for Metropolis updates.

An example of the plateau is illustrated in Fig. \ref{fig:Plateau}, with perturbation applied between strings of Hamming weight $3$ and $8$.  Suppose we start from a random bit-string.  For large $n$, with very high probability, we will start at a bit-string with Hamming weight close to $n/2$.  The plateau may be to the left or to the right of $n/2$; if the plateau is to the right, then most likely the random walker will not encounter it and fall quickly to the ground state in at most $\mathcal{O}(n^2)$ steps (see a few paragraphs below for a derivation).

\begin{figure}[htbp] 
   \centering
   \includegraphics[width=0.8\columnwidth]{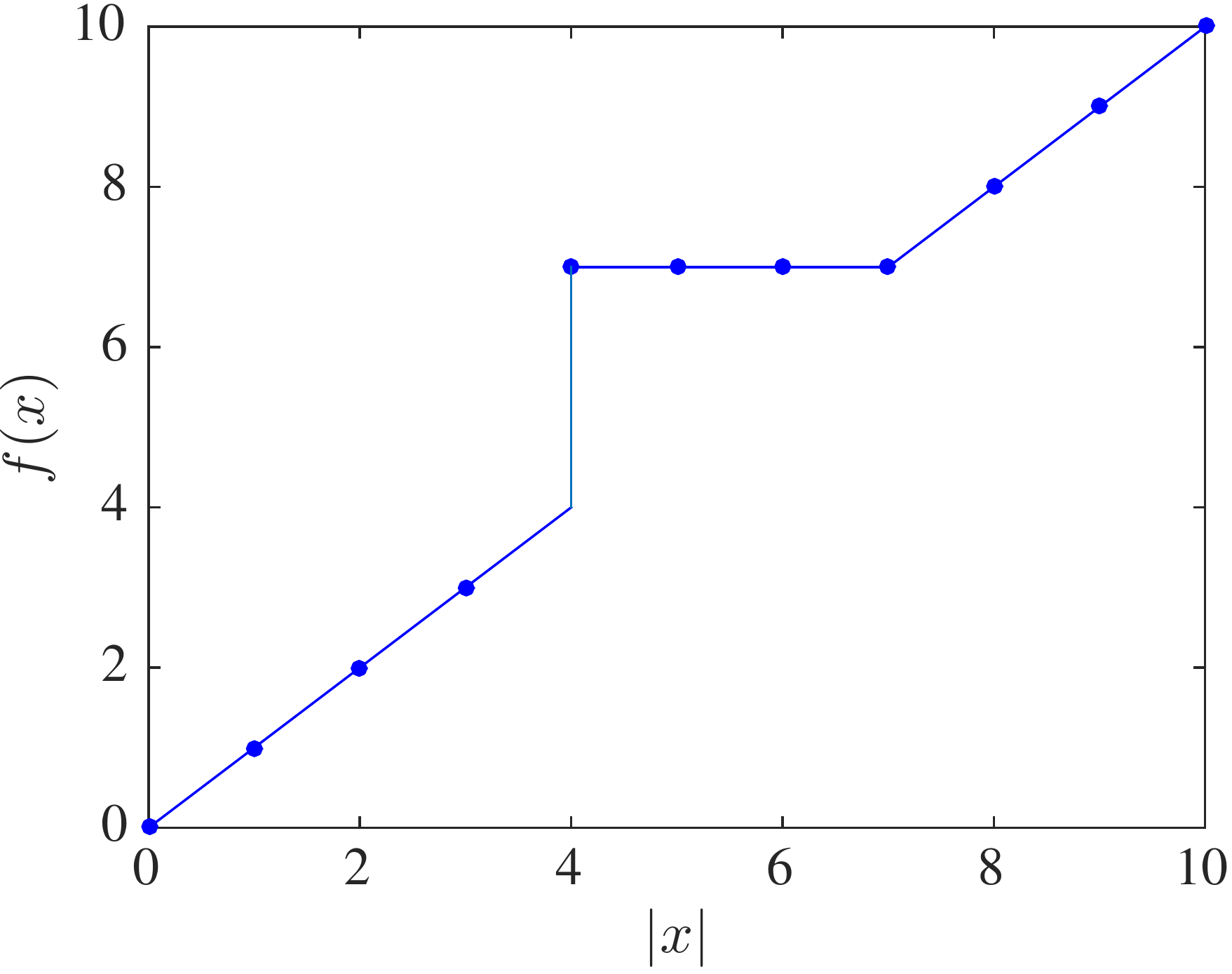} 
   \caption{$l=3,u=8$}
   \label{fig:Plateau}
\end{figure}

Thus, the interesting case is when the random walker arrives at the plateau from the right.  In this case, how much time would it take, typically, for the walker to fall off the left edge? It is intuitively clear that traversing the plateau will be the dominant contribution to the time taken to reach the ground state, as after that the random walker can easily walk down the potential. As we show later below, this time can be at most $\mathcal{O}(n^2)$ (ignoring transitions which take it back onto the plateau) for an inverse temperature that scales as $\beta = \Omega(\log n)$.

To evaluate the time to fall off the plateau, let us model the situation as follows. First, note that the perturbation is applied on strings of Hamming weight $l+1,l+2,\dots,u-1$, so the width of the plateau is $w = u-l-1$. Consider a random walk on a line of $w+1$ nodes labelled $0,1,\dots w$. Node $i$ represents the set of bit strings with Hamming weight $l+i$, with $0\leq i \leq w$. We assume that the random walker starts at node $w$, as it is falling onto the right edge of the plateau. Only nearest-neighbor moves are allowed and the walk terminates if the walker reaches node $0$.

Our model will estimate a shorter than actual time to fall off the left edge, because in the actual PHWO problem one can also go back up the slope on the right, and in addition we disallow transitions from strings of Hamming weight $l$ to $l+1$. This is justified because the Metropolis rule exponentially (in $\beta$) suppresses these transitions.

The transition probabilities $p_{i \to j}$ for this problem can be written as a $(w+1) \times (w+1)$ row-stochastic matrix $p_{ij} = p_{i \to j}$. $p$ is a tridiagonal matrix with zeroes on the diagonal, except at $p_{00}$ and $p_{ww}$.  First consider $1\leq i \leq w-1$. If the walker is at node $i$, then its Hamming weight is $l+i$. Thus walker will move to $i+1$ (which has Hamming weight $l+i+1$) with probability $\frac{n-(l+i)}{n}$ (the chance that the bit picked had the value $0$). Now consider, $1\leq i \leq w$ the Hamming weight will decrease to $l+i-1$ with probability $\frac{l+i}{n}$ (the chance that the bit picked had the value $1$). Combining this with the fact that a walker at node $0$ stays put, we can write:
\bes
\begin{align}
b_i &\equiv p_{i\to i}=\begin{cases} 1 \text{  if  } i=0 \\
0 \text{  if  } 1\leq i \leq (w-1) \\ 
1-\frac{l+w}{n} \text{  if  } i=w 
\end{cases},\\		  
c_i &\equiv p_{i-1\to i} = \begin{cases} 0 \text{  if  } i=1\\ 1-\frac{l+i-1}{n} \text{  if  } i=2,\dots,w \end{cases}, \\
a_i &\equiv p_{i\to i-1} = \frac{l+i}{n} \text{  if  } i=1,2,\dots,w.
\end{align}
\ees
Let $X(t)$ be the position of the random walker at time-step $t$. The random variable measuring the number of steps taken by the random walker starting from node $r$ would to reach node $s$ for the first time is
\beq
\tau_{r,s} \equiv \text{min} \{ t>0: X(t)=s,X(t-1)\neq s | X(0) =r\}\ .
\eeq
The quantity we are after is $\mathbb{E}\tau_{w,0}$, the expectation value of the random variable $\tau_{w,0}$, i.e., the mean time taken by the random walker to fall off the plateau. Since only nearest neighbor moves are allowed we have
\beq 
\mathbb{E}\tau_{w,0} =\sum_{r=1}^w \mathbb{E}\tau_{r,r-1}\ .
\eeq
Stefanov~\cite{stefanov1995mean} (see also Ref.~\cite{krafft1993mean}) has shown that
\beq \label{eq:stefanovformula}
\mathbb{E}\tau_{r,r-1} = \frac{1}{a_r} \left( 1 + \sum_{s=r+1}^{w} \prod_{t=r+1}^{s} \frac{c_t}{a_t} \right),
\eeq
where $c_{w+1}\equiv 0$. Evaluating the sum term by term: 
\bes
\label{eq:62}
\begin{align}
\mathbb{E}\tau_{w,w-1} &= \frac{n}{l+w}, \\
\mathbb{E}\tau_{w-1,w-2} &= \frac{n}{l+w-1} \left[1 + \frac{n-(l+w-1)}{l+w} \right], \\
\mathbb{E}\tau_{w-2,w-3} &= \frac{n}{l+w-2} \left[1 + \frac{n-(l+w-2)}{l+w-1} \right. \nonumber \\
& \left. + \frac{n-(l+w-2)}{l+w-1} \times \frac{n-(l+w-1)}{l+w}\right], \\
\vdots \nonumber \\
\mathbb{E}\tau_{w-k,w-k-1} &= \frac{n}{l+w-k} \left[1+ \frac{n-(l+w-k)}{l+w-(k-1)}+\dots \right. \nonumber \\
&  +\frac{n-(l+w-k)}{l+w-(k-1)}\times\cdots  \nonumber \\ 
& \left. \times\frac{n-(l+w-2)}{l+w-1} \times \frac{n-(l+w-1)}{l+w}\right]. 
\end{align}
\ees
Now consider the following cases:
\begin{enumerate}
\item $l,u = \mathcal{O}(1)$: Here, using the fact that $k=\mathcal{O}(w)=\mathcal{O}(1)$, we conclude that $\mathbb{E}\tau_{w-k,w-k-1} = \mathcal{O}(n^{k+1})$. Since the leading order term is $\mathbb{E}\tau_{w-(w-1),w-w}=\mathbb{E}\tau_{1,0}$, the time to fall off the plateau is $\mathcal{O}(n^w) = \mathcal{O}(n^{u-l-1}).$
\item For Reichardt's bound to give a constant lower-bound to the quantum problem, we need $u=l+o(l^{1/4})$. Since at most we can have $l = \mathcal{O}(n)$, we can conclude $\mathbb{E}\tau_{w-k,w-k-1} = \mathcal{O}\left(\frac{n}{l}\right)^{k+1}$. Therefore, the time to fall-off becomes $\mathbb{E}\tau_{w,0} = \mathcal{O}\left(w(\frac{n}{l})^w\right)$. 
\begin{itemize}
\item If $l = \Theta(n)$ and $w = \mathcal{O}(1)$, we can see that $\mathbb{E}\tau_{w,0} = \mathcal{O}(1)$, which is a constant time scaling.
\item If $l = \Theta(n)$ and $w = \mathcal{O}(n^a)$, where $0<a<1/4$, then $\mathbb{E}\tau_{w,0} = \mathcal{O}(n^a \mathcal{O}(1)^{n^a})$, which is super-polynomial.
\item More generally, if $l = \mathcal{O}(n^b)$, with $b \leq 1$ and $w = \mathcal{O}(n^a)$, where $0\leq a < b/4$, then we get the scaling $\mathbb{E}\tau_{w,0} = \mathcal{O}(n^{a}\mathcal{O}(n^{1-b})^{n^a})$
\end{itemize}
\end{enumerate}

\textit{Analysis of SA for plain Hamming weight} ---
Let us analyze the behavior of a fixed temperature, i.e., there is no annealing schedule, simulated annealing on the plain Hamming weight problem. Here the transition probabilities are:
\bes
\begin{align}
c_i &\equiv p_{i-1\to i} = \frac{n-i+1}{n} e^{-\beta} \ , \\
a_i &\equiv p_{i\to i-1} =  \frac{i}{n} \ ,
\end{align}
\ees
with $i=1,2,\dots,n$ denoting strings of Hamming weight $i$, and $\beta$ is the inverse temperature. Using the Stefanov formula~\eqref{eq:stefanovformula}, we can write (after much simplification):
\beq
\mathbb{E}\tau_{n-k,n-k-1} = \frac{n}{n-k} \binom{n}{k}^{-1} \sum_{l=0}^{k} e^{-l \beta} \binom{n}{k-l} \ .
\eeq
Thus, 
\beq \label{eqt:hamwttime}
\mathbb{E}\tau_{n,0} = \sum_{k=0}^{n-1} \frac{n}{n-k} \binom{n}{k}^{-1} \sum_{l=0}^{k} e^{-l \beta} \binom{n}{k-l} \ .
\eeq
This is the worst-case scenario as we are assuming that we start from the string of Hamming weight $n$, which is the farthest from the all-zeros string. Note that if we start from a random spin configuration, then with overwhelming probability, we will pick a string with Hamming weight close to $n/2$. Thus, most probably, $\mathbb{E}\tau_{n/2,0}$ will be the time to hit the ground state. We can write this as:
\beq
\mathbb{E}\tau_{n/2,0} = \sum_{k=n/2}^{n-1} \frac{n}{n-k} \binom{n}{k}^{-1} \sum_{l=0}^{k} e^{-l \beta} \binom{n}{k-l}.
\eeq

We first show that $\beta = \mathcal{O}(1)$ will lead to an exponential time to hit the ground state. We show this by showing that $\mathbb{E}\tau_{1,0}$ is exponential if $\beta = \mathcal{O}(1)$. To this end,
\bes
\begin{align}
\mathbb{E}\tau_{1,0} &= \mathbb{E}\tau_{n-(n-1),n-n} \\
&=\sum_{l=0}^{n-1} e^{-l \beta} \binom{n}{n-1-l} \\
&=e^\beta\left[(e^{-\beta}+1)^n-1\right],
\end{align}
\ees
which is clearly exponential in $n$ if $\beta= \mathcal{O}(1)$. Now let us suppose we have $\beta = \log n$, i.e. we decrease the temperature inverse logarithmically in system size. In this case,
\begin{align}
\mathbb{E}\tau_{1,0} = n \left[\left(1 + \frac{1}{n}\right)^n-1\right] \leq n (e - 1) = \mathcal{O}(n)\ .
\end{align}
Now it is intuitively clear that $\mathbb{E}\tau_{1,0} > \mathbb{E}\tau_{r,r-1}$ for all $r>1$, which implies that $n \mathbb{E}\tau_{1,0} \geq \mathbb{E}\tau_{n,0}  $. Thus, if $\beta = \log n$, then $\mathbb{E}\tau_{n,0} = \mathcal{O}(n^2)$ at worst.

To obtain a lower-bound on the performance of the algorithm, we take $\beta \to \infty$. Thus,  for each $k$ in Eq.\eqref{eqt:hamwttime}, only the $l=0$ term will survive. Thus,
\begin{align}
\lim_{\beta \to \infty} \mathbb{E}\tau_{n,0} &= \sum_{k=0}^{n-1} \frac{n}{n-k} = n \sum_{i=1}^n \frac{1}{i} \approx n (\log n + \gamma)\ ,
\end{align}
for large $n$, with $\gamma$ as the Euler-Mascheroni constant. So the scaling here is $\mathcal{O}(n \log n)$. This is the best possible performance for single-spin update SA with random spin-selection on the plain Hamming weight problem. Therefore, if $\beta = \Omega(\log n)$, the scaling will be between $\mathcal{O}(n \log n)$ and $\mathcal{O}(n^2)$.

To conclude, let us make a few remarks on the three different benchmarking metrics used in this work: (i)~TTS$_\mathrm{opt}$, (ii)~the mean time to first hit the ground state ($\mathbb{E}\tau_{n/2,0}$ or $\mathbb{E}\tau_{n,0}$), and (iii)~the time to cross a particular fixed threshold probability (typically high, say $p_{\mathrm{ThC}}=0.9$) of finding the ground state ($\tau_{\mathrm{ThC}}$). In order to compare the three, we would need to find  TTS($t_f = \mathbb{E}\tau_{n/2,0}$) and TTS($t_f=\tau_{\mathrm{ThC}}$).  By definition, TTS$_\mathrm{opt}$, will be the smallest of the three. Further note that typically $p_{\mathrm{GS}}(t_f = \mathbb{E}\tau_{n/2,0}) < p_{\mathrm{ThC}}$. This implies that if $t_f = \mathbb{E}\tau_{n/2,0}$, then the algorithm would need to be repeated more times than if $t_f = \tau_{\mathrm{ThC}}$ to obtain the same confidence that we have seen the ground state at least once. Note that either could have the smaller TTS, depending on the problem at hand.

We remark that the analysis performed above for random spin-selected SA with the complexity metric as $\mathbb{E}\tau_{n/2,0}$, captures extremely well the numerically obtained scaling of sequential spin-selected SA with the complexity metric as $\tau_{\mathrm{ThC}}$.

\section{Other PHWO problems} 
\label{app:DiabaticScaling}

\begin{figure}[t] 
   \subfigure[]{\includegraphics[width=0.8\columnwidth]{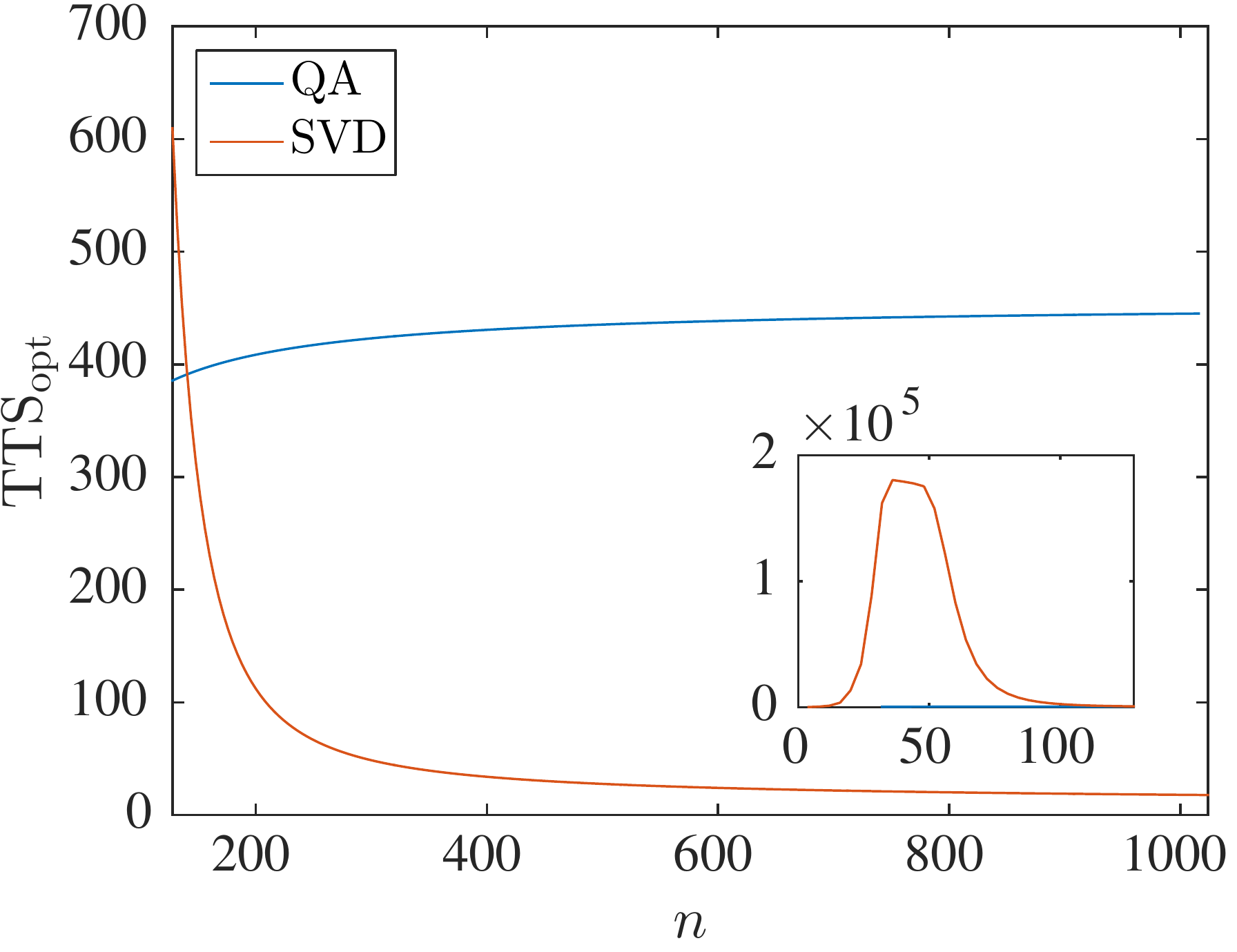}}
   \subfigure[]{\includegraphics[width=0.8\columnwidth]{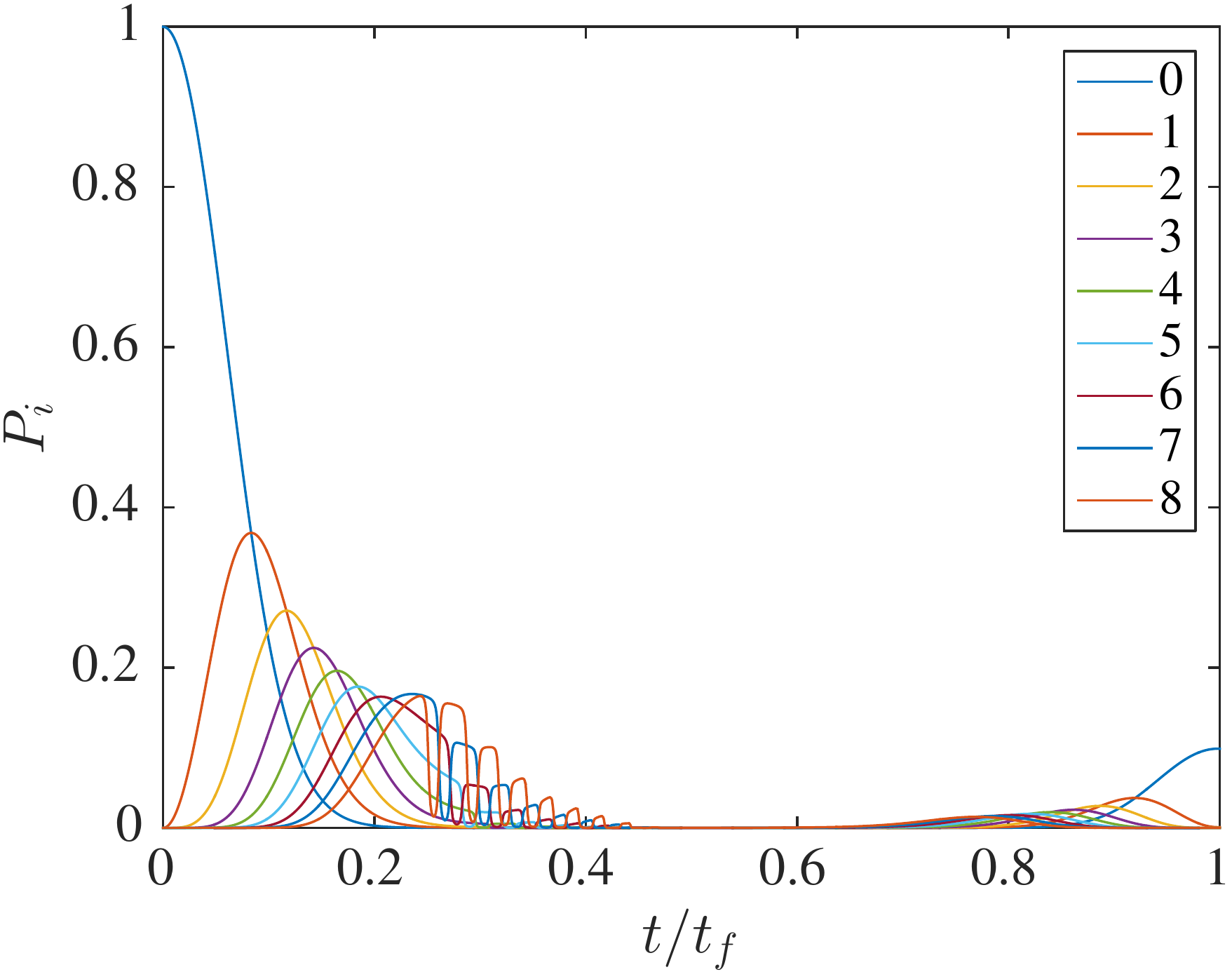}}
   \caption{(a) The optimal TTS for the spike problem \cite{Farhi-spike-problem}.  Inset: the optimal TTS for small problem sizes, where we observe SVD at first scaling poorly. However, as $n$ grows, this difficulty vanishes and it quickly beats QA. (b) We observe similar diabatic transitions for this problem (shown is $n=512$ and $t_f = 9.85$) as we observed for the plateau [Fig.~\ref{fig:QA_EnergyOverlap}], although here the success probability appears to saturate faster than for the plateau problem.  
}
\label{fig:Spike_TTSopt}
\end{figure}   

\begin{figure}[t] 
   \subfigure[]{\includegraphics[width=0.8\columnwidth]{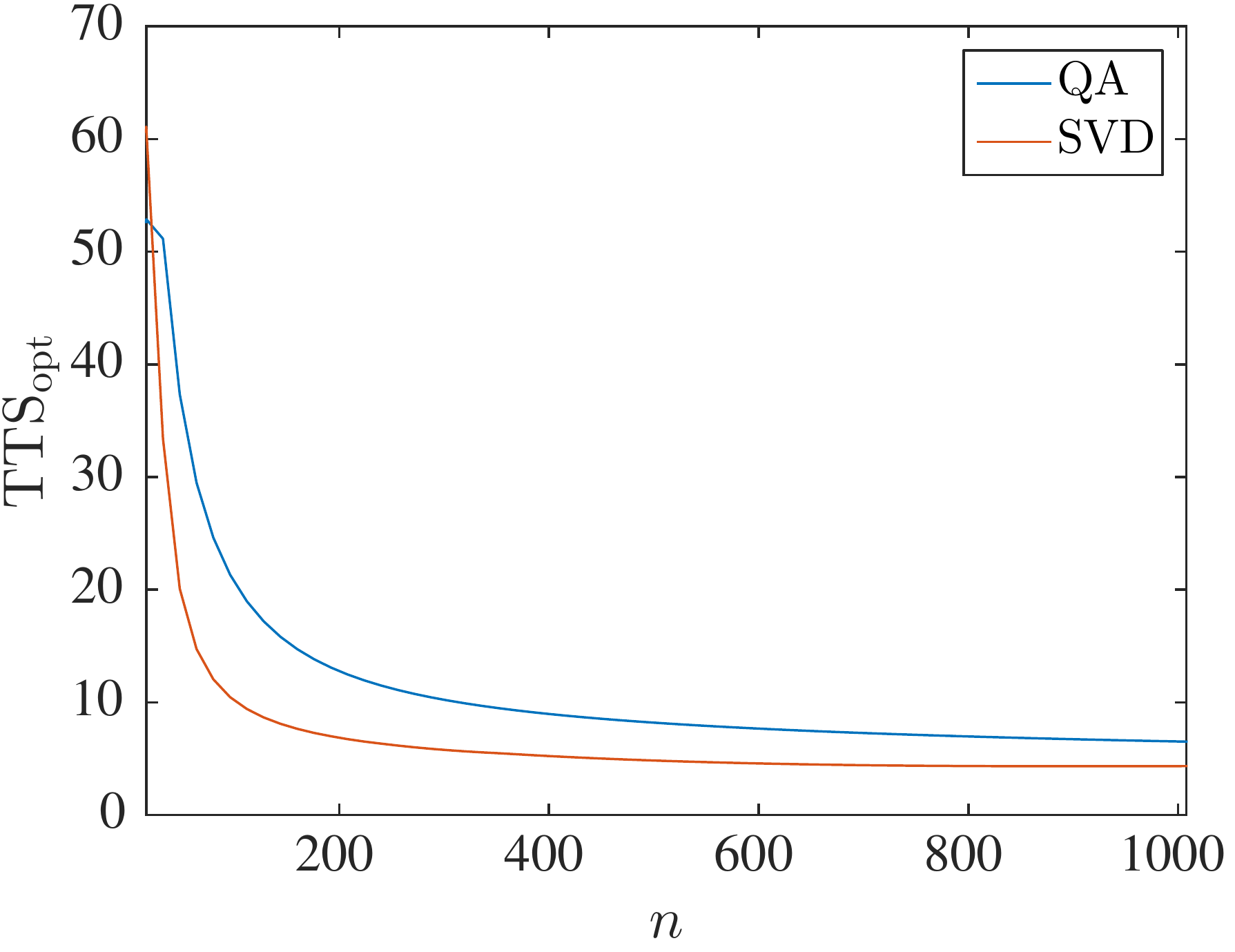}}
   \subfigure[]{\includegraphics[width=0.8\columnwidth]{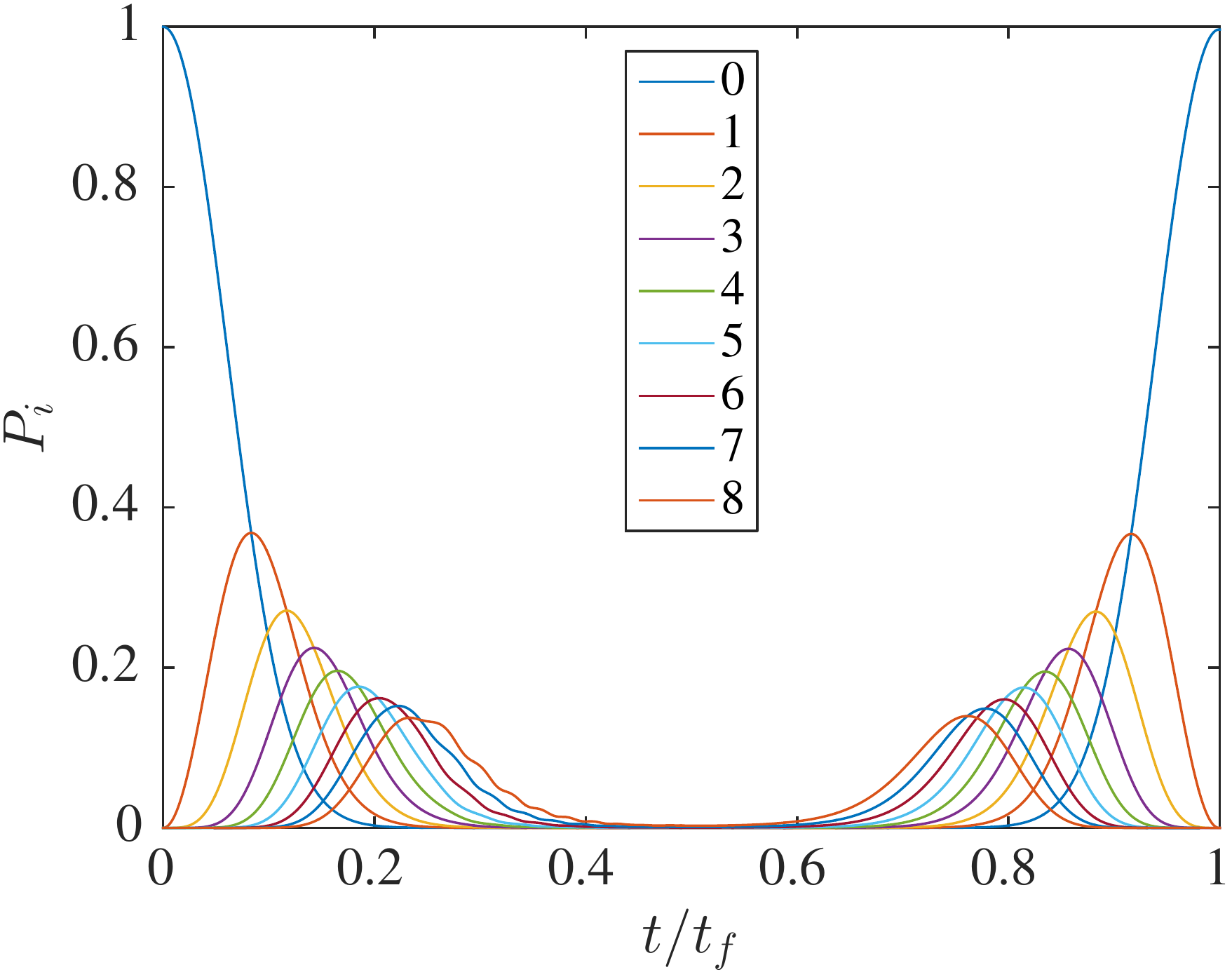}}
  \caption{(a)  The optimal TTS for the plateau with $l(n)=n/4$ and $u(n)=l(n)+6$.  As can be seen, SVD is much better than QA for these problem sizes. (b) We observe similar diabatic cascades for this problem (shown is $n=512$ and $t_f = 10$) as we observed for the plateau problem [Fig.~\ref{fig:QA_EnergyOverlap}]. 
}
\label{fig:movingplatTTSopt}
\end{figure}

In this section we consider several other versions of the PHWO problem.
The first two examples exhibit diabatic cascades, while the last does not. 
\begin{enumerate}
\item The ``spike" problem studied by Farhi \textit{et al}. \cite{Farhi-spike-problem} has the following cost function:
\beq
f(x) = \begin{cases} n, & \text{if  } |x| = \frac{n}{4},\\
\abs{x}, & \text{elsewhere} \end{cases} \ .
\eeq
This too is a problem designed explicitly to stymie SA (in Ref.~\cite{Farhi-spike-problem} it is argued that SA will take exponential time) and has polynomially decreasing gap $~\mathcal{O}(n^{-1/2})$ (and thus will have some polynomial run-time in the adiabatic regime). In Fig.~\ref{fig:Spike_TTSopt} we show that this problem too shows diabatic cascades and a corresponding outperformance by SVD. 

\item We pick an instance of the plateau with $l=n/4$ [i.e., $\mathcal{O}(n)$] and $u=l+6$ [i.e., $l+\mathcal{O}(1)$]. This problem has a constant lower-bound for QA  by Reichardt's theorem [see Eq.~\eqref{eqt:Reichbound}], and SA is able to solve it in constant time [recall the discussion below Eq.~\eqref{eq:62}]. 
In Fig.~\ref{fig:movingplatTTSopt}, we see that this problem too exhibits diabatic transitions for QA and an advantage for SVD. For this problem, as we show in SM-\ref{app:asymptoticVSC}, the semiclassical effective potential asymptotically becomes identical to the unperturbed Hamming weight problem, which explains why the TTS$_\mathrm{opt}$ for this is decreasing: the TTS$_\mathrm{opt}$ for the (plain) Hamming weight problem is constant.


\item Consider the following class of PHWO problems, introduced in Ref.~\cite{vanDam:01}:
\beq
f(x) = \begin{cases} p(|x|), &  |x| > \left( \frac{1}{2}+\epsilon \right)\\
\abs{x}, & |x| \leq \left( \frac{1}{2}+\epsilon \right) \end{cases} \ ,
\eeq
where $\epsilon>0$ and $p(\cdot)$ is a decreasing function which attains the global minimum, $-1$, in the $|x| > \left( \frac{1}{2}+\epsilon \right)$ region. Ref.~\cite{vanDam:01} proved that this class of problems has an exponentially decreasing gap, and therefore the adiabatic algorithm would take exponentially long to find the ground state. We have considered the following instance of this class:
\beq
f(x) = \begin{cases} -1, &  |x| = n,\\
\abs{x}, & \text{otherwise}  \end{cases} \ .
\eeq
In this case, we did not observe the diabatic transition phenomenon (not shown), i.e., the optimal TTS is achieved by evolving adiabatically and remaining in the ground state. Thus the diabatic transition phenomenon does not persist for all PHWO problems. 
\end{enumerate}

\section{Asymptotic behavior of semiclassical effective potentials}
\label{app:asymptoticVSC}
Here we analyze the behavior of the (symmetric) effective potential we found in Eq.~\eqref{eqt:symvsc} and write down here in simplified form:
\begin{align}
V_{\mathrm{SC}}(\theta, \varphi, s) &\equiv \bra{\theta,\varphi} H(s) \ket{\theta,\varphi}, \notag \\
&= \frac{n}{2}(1-s)(1-\sin\theta\cos\varphi) \notag \\
&\quad +s\sum_{w=0}^n f(w) \binom{n}{w} p(\theta)^k (1-p(\theta))^{n-k},
\end{align}
where $p(\theta)\equiv \sin^2\left(\frac{\theta}{2}\right)$. We take $f(w)$ to be a PHWO  Hamiltonian of the form of Eq.~\eqref{eq:pertham}. We can write the plateau's effective potential as:
\beq
V_{\mathrm{SC}}^{\mathrm{pert}} = V_{\mathrm{SC}}^{{\mathrm{unpert}}}+s\sum_{l<k<u} f(k) \binom{n}{k} p(\theta)^k (1-p(\theta))^{n-k}.
\eeq
Note the resemblance between the perturbation in the above equation and the term that appears in Reichardt's lower-bound [see Eq.~\eqref{eq:hk}]. The only difference is that we have replaced $q(s)$ with $p(\theta)$. Now, if we trace through the arguments deriving the lower-bound on the gap, we see that the same holds for the perturbation term here. In particular:
\beq
\sum_{l<k<u} f(k) \binom{n}{k} p(\theta)^k (1-p(\theta))^{n-k} = \mathcal{O}\left(h \frac{u-l}{\sqrt{l}}\right).
\eeq
Therefore, when $l,u = \mathcal{O}(1)$, the semiclassical effective potential asymptotically maintains a perturbation relative to the unperturbed problem. On the other hand, for the cases $l = \mathcal{O}(n), u = l + \mathcal{O}(1)$ and $l = \mathcal{O}(n), u = l  + \mathcal{O}(n^{1/4 - \epsilon})$,  the perturbation to the semiclassical effective potential vanishes asymptotically. This shows that the effective potential leads to equivalent conclusions about computational hardness as the gap analysis.

\section{Behavior of the average Hamming weight on the classical Gibbs state} \label{app:Gibbs}

\begin{figure*}[t] 
   \subfigure[]{\includegraphics[width=0.3\textwidth]{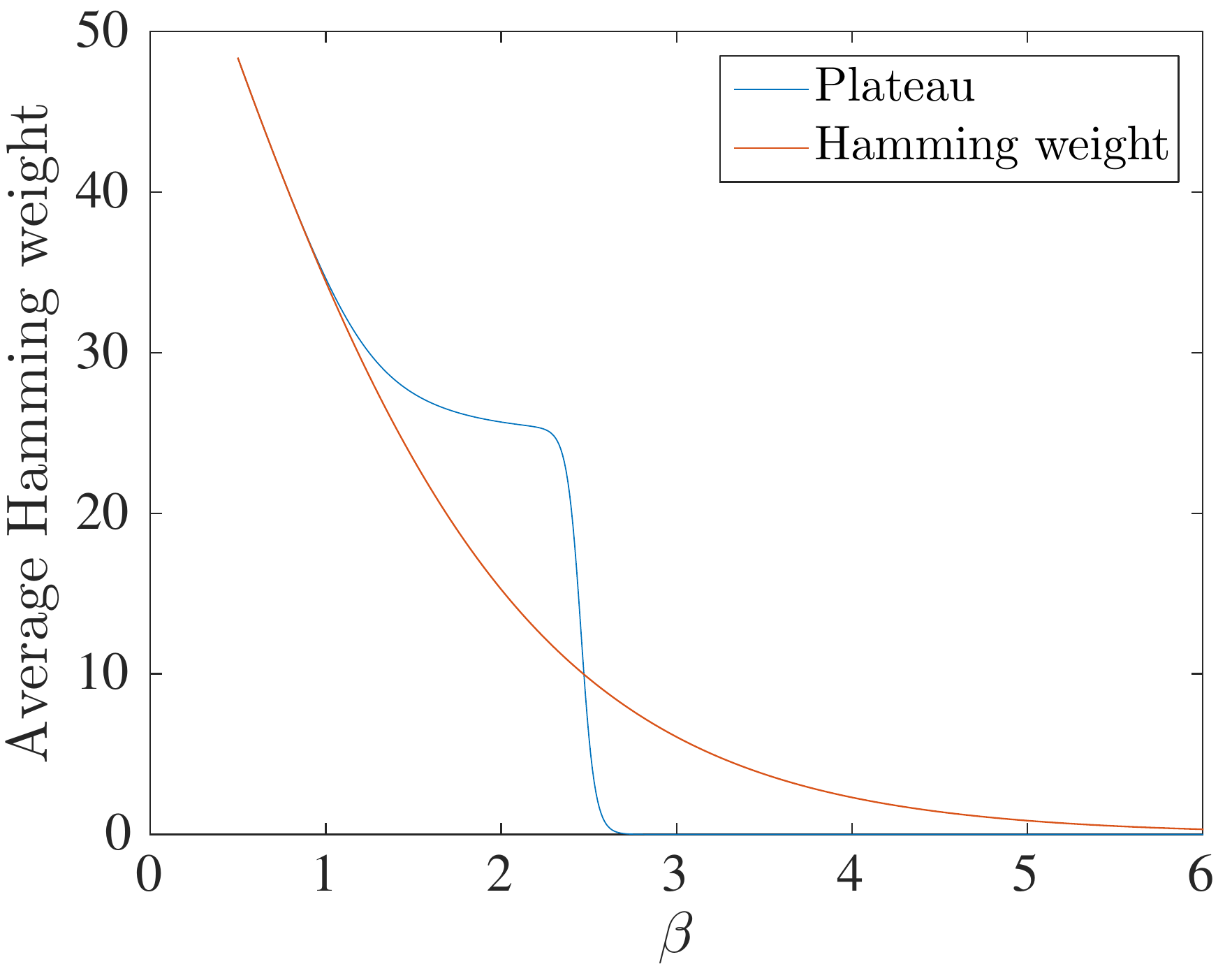} \label{fig:AvgHamWtPlatHam}}
   \subfigure[]{\includegraphics[width=0.3\textwidth]{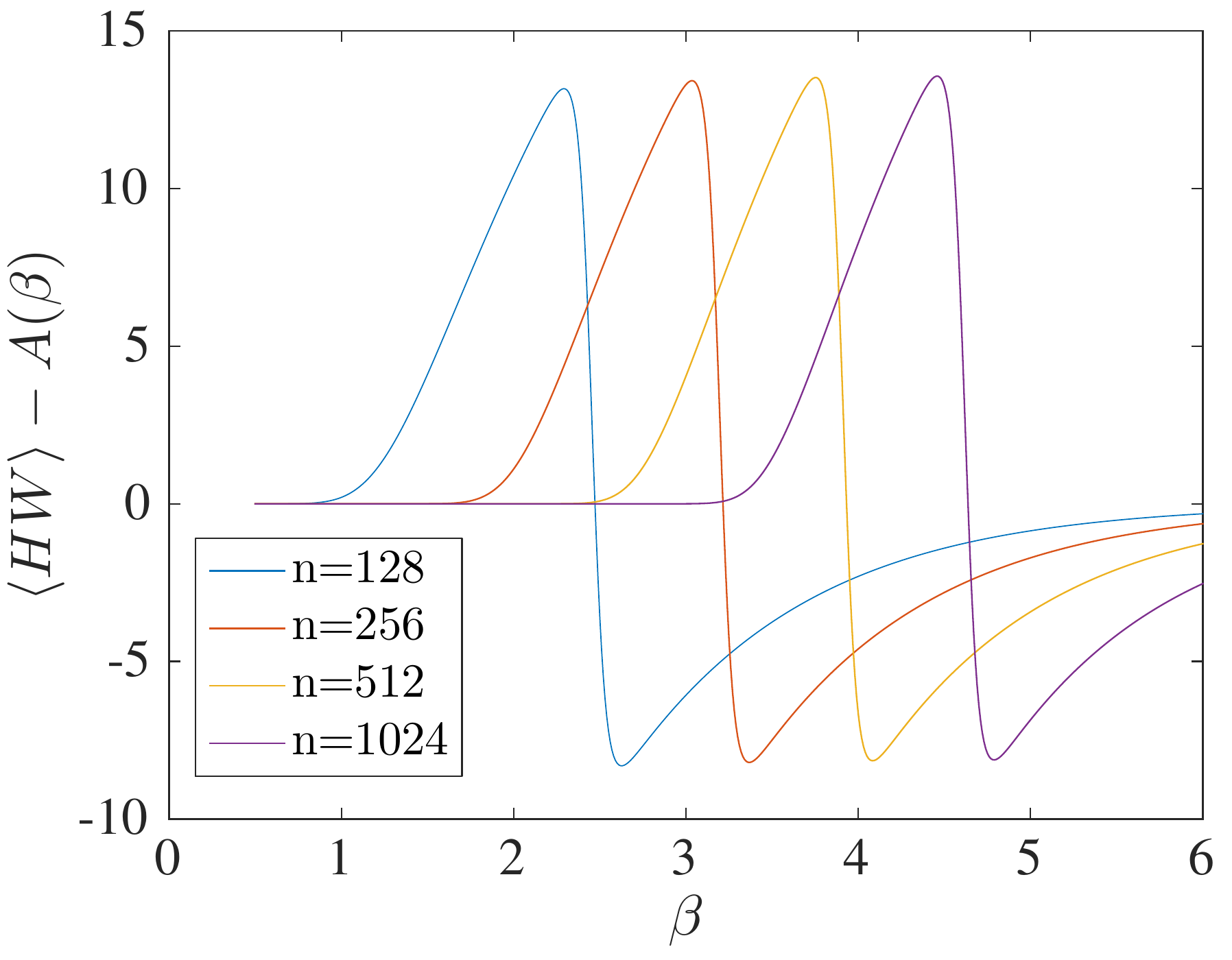}\label{fig:AvgHamWtStatPlat}}
    \subfigure[]{\includegraphics[width=0.3\textwidth]{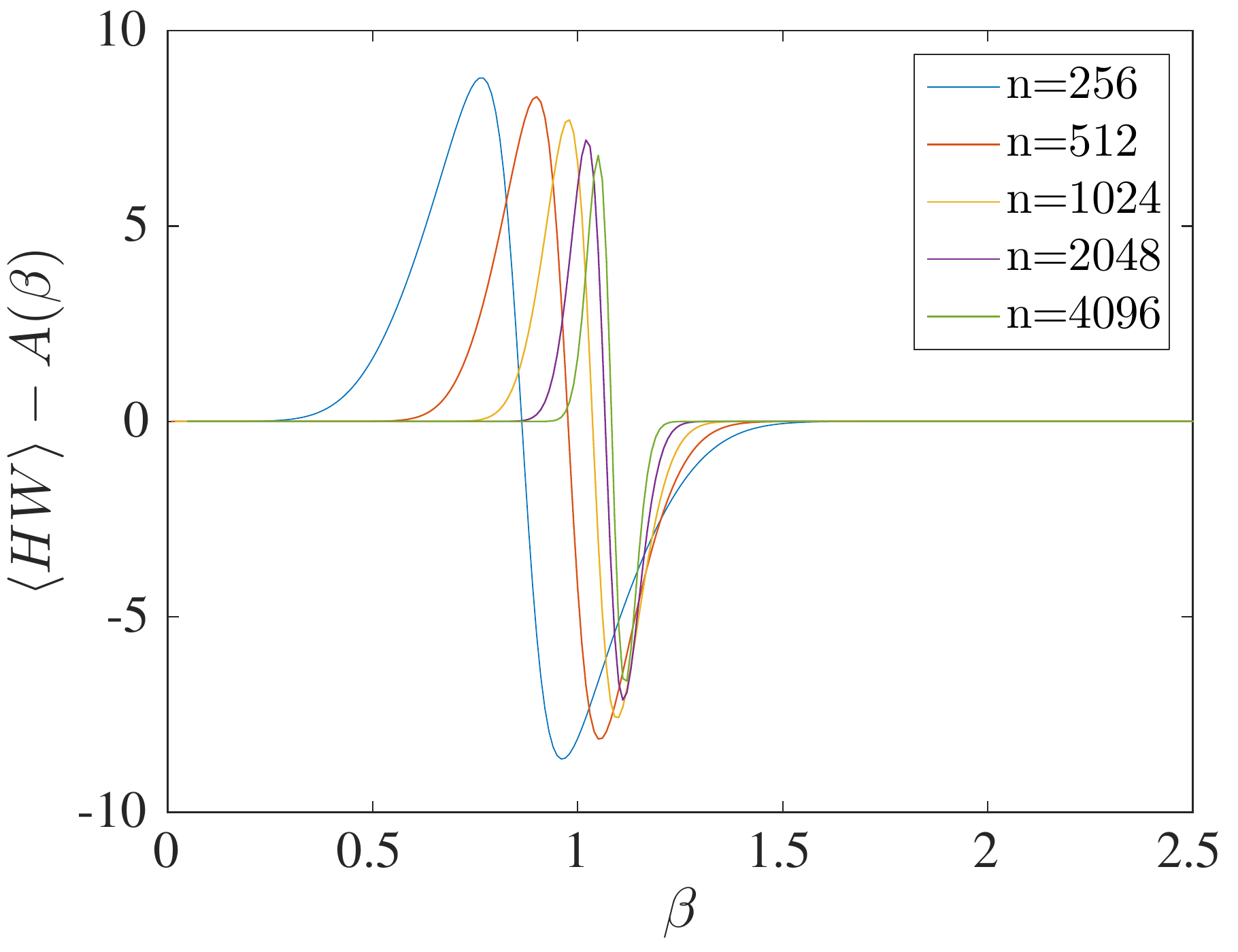}\label{fig:AvgHamWtMovPlat}}
   \caption{(a) $\langle \mathrm{HW} \rangle$ in the Gibbs state of the plain Hamming weight function and the plateau function with $l=0$ and $u=26$ for $n=128$. The two functions agree  closely except in the region of the ``drop." (b) The ``signal" $\wich{\mathrm{HW}}-A(\beta)$ for $l=0,u=26$, for $n=128,256,512,1024$. The same sharp drop is seen for all $n$. (c) $\wich{\mathrm{HW}}-A(\beta)$ for the case $l(n)=n/4,u(n)=l(n)+26$, for $n=256,512,1024,2048,3200,4096$. Here the drop is slowly decreasing with $n$.}
\end{figure*}   

In this section we expand on the behavior of the average Hamming weight $\wich{\mathrm{HW}}$ for different cases of the plateau problem. 

In Fig.~\ref{fig:GSGibbs}, we plotted $\wich{\mathrm{HW}}$ for the classical Gibbs state as a function of the inverse temperature, $\beta$.  We interpreted the sharp drops in this quantity as a sign that the problem becomes hard for SA. To understand this better we can consider these sharp drops as modifications of the smooth behavior [see Fig.~\ref{fig:AvgHamWtPlatHam}] of the \emph{plain} Hamming weight function, i.e., Eq.~\eqref{eq:pertham} with $p(\abs{x})=0$. For this case:

\beq
A(\beta) \equiv \wich{\mathrm{HW}}_\mathrm{HamWt} = \frac{n e^{-\beta}}{1+e^{-\beta}}.
\eeq
In order to study just the ``drop," we subtract $A(\beta)$ as the ``background," and focus our attention on the ``signal," which is the sharp change.

We consider the following two varieties of the plateau:
\begin{enumerate}
 \item  $l,u = \mathcal{O}(1)$. This was the case studied in the main text, and we  proved above
 that in this case SA requires polynomial time. As we can see from Fig.~\ref{fig:AvgHamWtStatPlat}, the sharpness of the drop remains constant with increasing $n$.  This is consistent with the problem being hard for SA. 

\item  $l = \mathcal{O}(n)$ and $u = l + \mathcal{O}(1)$. Reichardt's lower bound applies in this case, and we proved above
that this case is solved in time $\mathcal{O}(1)$ by SA. As can be seen in Fig.~\ref{fig:AvgHamWtMovPlat} the sharpness of the drop decreases (albeit slowly) with $n$.  This is consistent with the problem being easy for SA.

\end{enumerate}

To conclude we remark on another case which has constant gap lower-bound by Reichardt's proof. Here, $l = \mathcal{O}(n), u = l  + \mathcal{O}(n^{1/4 - \epsilon})$, $\epsilon>0$. We do not find any dramatic changes in the instantaneous quantum ground state during the evolution as in Fig.~\ref{fig:GSGibbs}, suggesting that multi-qubit tunneling does not play a significant role, hence making it a less relevant problem for our discussion.


\begin{thebibliography}{48}%
\makeatletter
\providecommand \@ifxundefined [1]{%
 \@ifx{#1\undefined}
}%
\providecommand \@ifnum [1]{%
 \ifnum #1\expandafter \@firstoftwo
 \else \expandafter \@secondoftwo
 \fi
}%
\providecommand \@ifx [1]{%
 \ifx #1\expandafter \@firstoftwo
 \else \expandafter \@secondoftwo
 \fi
}%
\providecommand \natexlab [1]{#1}%
\providecommand \enquote  [1]{``#1''}%
\providecommand \bibnamefont  [1]{#1}%
\providecommand \bibfnamefont [1]{#1}%
\providecommand \citenamefont [1]{#1}%
\providecommand \href@noop [0]{\@secondoftwo}%
\providecommand \href [0]{\begingroup \@sanitize@url \@href}%
\providecommand \@href[1]{\@@startlink{#1}\@@href}%
\providecommand \@@href[1]{\endgroup#1\@@endlink}%
\providecommand \@sanitize@url [0]{\catcode `\\12\catcode `\$12\catcode
  `\&12\catcode `\#12\catcode `\^12\catcode `\_12\catcode `\%12\relax}%
\providecommand \@@startlink[1]{}%
\providecommand \@@endlink[0]{}%
\providecommand \url  [0]{\begingroup\@sanitize@url \@url }%
\providecommand \@url [1]{\endgroup\@href {#1}{\urlprefix }}%
\providecommand \urlprefix  [0]{URL }%
\providecommand \Eprint [0]{\href }%
\providecommand \doibase [0]{http://dx.doi.org/}%
\providecommand \selectlanguage [0]{\@gobble}%
\providecommand \bibinfo  [0]{\@secondoftwo}%
\providecommand \bibfield  [0]{\@secondoftwo}%
\providecommand \translation [1]{[#1]}%
\providecommand \BibitemOpen [0]{}%
\providecommand \bibitemStop [0]{}%
\providecommand \bibitemNoStop [0]{.\EOS\space}%
\providecommand \EOS [0]{\spacefactor3000\relax}%
\providecommand \BibitemShut  [1]{\csname bibitem#1\endcsname}%
\let\auto@bib@innerbib\@empty
\bibitem [{\citenamefont {Nishimori}(2001)}]{Nishimori-book}%
  \BibitemOpen
  \bibfield  {author} {\bibinfo {author} {\bibfnamefont {H.}~\bibnamefont
  {Nishimori}},\ }\href {\doibase 10.1093/acprof:oso/9780198509417.001.0001}
  {\emph {\bibinfo {title} {Statistical Physics of Spin Glasses and Information
  Processing: An Introduction}}}\ (\bibinfo  {publisher} {Oxford University
  Press},\ \bibinfo {address} {Oxford, UK},\ \bibinfo {year}
  {2001})\BibitemShut {NoStop}%
\bibitem [{\citenamefont {Kirkpatrick}\ \emph {et~al.}(1983)\citenamefont
  {Kirkpatrick}, \citenamefont {Gelatt},\ and\ \citenamefont
  {Vecchi}}]{kirkpatrick_optimization_1983}%
  \BibitemOpen
  \bibfield  {author} {\bibinfo {author} {\bibfnamefont {S.}~\bibnamefont
  {Kirkpatrick}}, \bibinfo {author} {\bibfnamefont {C.~D.}\ \bibnamefont
  {Gelatt}}, \ and\ \bibinfo {author} {\bibfnamefont {M.~P.}\ \bibnamefont
  {Vecchi}},\ }\href {\doibase 10.1126/science.220.4598.671} {\bibfield
  {journal} {\bibinfo  {journal} {Science}\ }\textbf {\bibinfo {volume}
  {220}},\ \bibinfo {pages} {671} (\bibinfo {year} {1983})}\BibitemShut
  {NoStop}%
\bibitem [{\citenamefont {Ray}\ \emph {et~al.}(1989)\citenamefont {Ray},
  \citenamefont {Chakrabarti},\ and\ \citenamefont
  {Chakrabarti}}]{PhysRevB.39.11828}%
  \BibitemOpen
  \bibfield  {author} {\bibinfo {author} {\bibfnamefont {P.}~\bibnamefont
  {Ray}}, \bibinfo {author} {\bibfnamefont {B.~K.}\ \bibnamefont
  {Chakrabarti}}, \ and\ \bibinfo {author} {\bibfnamefont {A.}~\bibnamefont
  {Chakrabarti}},\ }\href {\doibase 10.1103/PhysRevB.39.11828} {\bibfield
  {journal} {\bibinfo  {journal} {Phys. Rev. B}\ }\textbf {\bibinfo {volume}
  {39}},\ \bibinfo {pages} {11828} (\bibinfo {year} {1989})}\BibitemShut
  {NoStop}%
\bibitem [{\citenamefont {Finnila}\ \emph {et~al.}(1994)\citenamefont
  {Finnila}, \citenamefont {Gomez}, \citenamefont {Sebenik}, \citenamefont
  {Stenson},\ and\ \citenamefont {Doll}}]{finnila_quantum_1994}%
  \BibitemOpen
  \bibfield  {author} {\bibinfo {author} {\bibfnamefont {A.}~\bibnamefont
  {Finnila}}, \bibinfo {author} {\bibfnamefont {M.}~\bibnamefont {Gomez}},
  \bibinfo {author} {\bibfnamefont {C.}~\bibnamefont {Sebenik}}, \bibinfo
  {author} {\bibfnamefont {C.}~\bibnamefont {Stenson}}, \ and\ \bibinfo
  {author} {\bibfnamefont {J.}~\bibnamefont {Doll}},\ }\href
  {http://www.sciencedirect.com/science/article/pii/0009261494001170}
  {\bibfield  {journal} {\bibinfo  {journal} {Chemical Physics Letters}\
  }\textbf {\bibinfo {volume} {219}},\ \bibinfo {pages} {343} (\bibinfo {year}
  {1994})}\BibitemShut {NoStop}%
\bibitem [{\citenamefont {Kadowaki}\ and\ \citenamefont
  {Nishimori}(1998)}]{kadowaki_quantum_1998}%
  \BibitemOpen
  \bibfield  {author} {\bibinfo {author} {\bibfnamefont {T.}~\bibnamefont
  {Kadowaki}}\ and\ \bibinfo {author} {\bibfnamefont {H.}~\bibnamefont
  {Nishimori}},\ }\href {\doibase 10.1103/PhysRevE.58.5355} {\bibfield
  {journal} {\bibinfo  {journal} {Phys. Rev. E}\ }\textbf {\bibinfo {volume}
  {58}},\ \bibinfo {pages} {5355} (\bibinfo {year} {1998})}\BibitemShut
  {NoStop}%
\bibitem [{\citenamefont {Farhi}\ \emph {et~al.}(2000)\citenamefont {Farhi},
  \citenamefont {Goldstone}, \citenamefont {Gutmann},\ and\ \citenamefont
  {Sipser}}]{farhi_quantum_2000}%
  \BibitemOpen
  \bibfield  {author} {\bibinfo {author} {\bibfnamefont {E.}~\bibnamefont
  {Farhi}}, \bibinfo {author} {\bibfnamefont {J.}~\bibnamefont {Goldstone}},
  \bibinfo {author} {\bibfnamefont {S.}~\bibnamefont {Gutmann}}, \ and\
  \bibinfo {author} {\bibfnamefont {M.}~\bibnamefont {Sipser}},\ }\href
  {http://arxiv.org/abs/quant-ph/0001106} {\bibfield  {journal} {\bibinfo
  {journal} {arXiv:quant-ph/0001106}\ } (\bibinfo {year} {2000})}\BibitemShut
  {NoStop}%
\bibitem [{\citenamefont {{S. Suzuki and A. Das (guest
  eds.)}}(2015)}]{EPJ-ST:2015}%
  \BibitemOpen
  \bibfield  {author} {\bibinfo {author} {\bibnamefont {{S. Suzuki and A. Das
  (guest eds.)}}},\ }\href
  {http://epjst.epj.org/articles/epjst/abs/2015/01/contents/contents.html}
  {\bibfield  {journal} {\bibinfo  {journal} {Eur. Phys. J. Spec. Top.}\
  }\textbf {\bibinfo {volume} {224}},\ \bibinfo {pages} {1} (\bibinfo {year}
  {2015})}\BibitemShut {NoStop}%
\bibitem [{\citenamefont {Heim}\ \emph {et~al.}(2014)\citenamefont {Heim},
  \citenamefont {R{\o}nnow}, \citenamefont {Isakov},\ and\ \citenamefont
  {Troyer}}]{Heim:2014jf}%
  \BibitemOpen
  \bibfield  {author} {\bibinfo {author} {\bibfnamefont {B.}~\bibnamefont
  {Heim}}, \bibinfo {author} {\bibfnamefont {T.~F.}\ \bibnamefont {R{\o}nnow}},
  \bibinfo {author} {\bibfnamefont {S.~V.}\ \bibnamefont {Isakov}}, \ and\
  \bibinfo {author} {\bibfnamefont {M.}~\bibnamefont {Troyer}},\ }\href
  {http://arXiv.org/abs/1411.5693} {\bibfield  {journal} {\bibinfo  {journal}
  {arXiv:1411.5693}\ } (\bibinfo {year} {2014})}\BibitemShut {NoStop}%
\bibitem [{\citenamefont {Farhi}\ \emph
  {et~al.}(2002{\natexlab{a}})\citenamefont {Farhi}, \citenamefont
  {Goldstone},\ and\ \citenamefont {Gutmann}}]{Farhi-spike-problem}%
  \BibitemOpen
  \bibfield  {author} {\bibinfo {author} {\bibfnamefont {E.}~\bibnamefont
  {Farhi}}, \bibinfo {author} {\bibfnamefont {J.}~\bibnamefont {Goldstone}}, \
  and\ \bibinfo {author} {\bibfnamefont {S.}~\bibnamefont {Gutmann}},\ }\href
  {http://arXiv.org/abs/quant-ph/0201031} {\bibfield  {journal} {\bibinfo
  {journal} {arXiv:quant-ph/0201031}\ } (\bibinfo {year}
  {2002}{\natexlab{a}})}\BibitemShut {NoStop}%
\bibitem [{\citenamefont {Reichardt}(2004)}]{Reichardt:2004}%
  \BibitemOpen
  \bibfield  {author} {\bibinfo {author} {\bibfnamefont {B.~W.}\ \bibnamefont
  {Reichardt}},\ }in\ \href {\doibase 10.1145/1007352.1007428} {\emph {\bibinfo
  {booktitle} {Proceedings of the Thirty-sixth Annual ACM Symposium on Theory
  of Computing}}},\ \bibinfo {series and number} {STOC '04}\ (\bibinfo
  {publisher} {ACM},\ \bibinfo {address} {New York, NY, USA},\ \bibinfo {year}
  {2004})\ pp.\ \bibinfo {pages} {502--510}\BibitemShut {NoStop}%
\bibitem [{\citenamefont {van Dam}\ \emph {et~al.}(2001)\citenamefont {van
  Dam}, \citenamefont {Mosca},\ and\ \citenamefont {Vazirani}}]{vanDam:01}%
  \BibitemOpen
  \bibfield  {author} {\bibinfo {author} {\bibfnamefont {W.}~\bibnamefont {van
  Dam}}, \bibinfo {author} {\bibfnamefont {M.}~\bibnamefont {Mosca}}, \ and\
  \bibinfo {author} {\bibfnamefont {U.}~\bibnamefont {Vazirani}},\ }\bibfield
  {booktitle} {\emph {\bibinfo {booktitle} {Foundations of Computer Science,
  2001. Proceedings. 42nd IEEE Symposium on}},\ }\href
  {http://ieeexplore.ieee.org/xpls/abs_all.jsp?arnumber=959902&tag=1}
  {\bibfield  {journal} {\bibinfo  {journal} {Foundations of Computer Science,
  2001. Proceedings. 42nd IEEE Symposium on}\ ,\ \bibinfo {pages} {279}}
  (\bibinfo {year} {8-11 Oct. 2001})}\BibitemShut {NoStop}%
\bibitem [{\citenamefont {Schaller}\ and\ \citenamefont
  {Sch{\"{u}}tzhold}(2010)}]{Schaller:2007uq}%
  \BibitemOpen
  \bibfield  {author} {\bibinfo {author} {\bibfnamefont {G.}~\bibnamefont
  {Schaller}}\ and\ \bibinfo {author} {\bibfnamefont {R.}~\bibnamefont
  {Sch{\"{u}}tzhold}},\ }\href {http://arXiv.org/abs/0708.1882} {\bibfield
  {journal} {\bibinfo  {journal} {Quantum Information \& Computation}\ }\textbf
  {\bibinfo {volume} {10}},\ \bibinfo {pages} {0109} (\bibinfo {year}
  {2010})}\BibitemShut {NoStop}%
\bibitem [{\citenamefont {Jansen}\ \emph {et~al.}(2007)\citenamefont {Jansen},
  \citenamefont {Ruskai},\ and\ \citenamefont {Seiler}}]{Jansen:07}%
  \BibitemOpen
  \bibfield  {author} {\bibinfo {author} {\bibfnamefont {S.}~\bibnamefont
  {Jansen}}, \bibinfo {author} {\bibfnamefont {M.-B.}\ \bibnamefont {Ruskai}},
  \ and\ \bibinfo {author} {\bibfnamefont {R.}~\bibnamefont {Seiler}},\ }\href
  {http://scitation.aip.org/content/aip/journal/jmp/48/10/10.1063/1.2798382}
  {\bibfield  {journal} {\bibinfo  {journal} {J. Math. Phys.}\ }\textbf
  {\bibinfo {volume} {48}},\  (\bibinfo {year} {2007})}\BibitemShut {NoStop}%
\bibitem [{\citenamefont {Lidar}\ \emph {et~al.}(2009)\citenamefont {Lidar},
  \citenamefont {Rezakhani},\ and\ \citenamefont {Hamma}}]{lidar:102106}%
  \BibitemOpen
  \bibfield  {author} {\bibinfo {author} {\bibfnamefont {D.~A.}\ \bibnamefont
  {Lidar}}, \bibinfo {author} {\bibfnamefont {A.~T.}\ \bibnamefont
  {Rezakhani}}, \ and\ \bibinfo {author} {\bibfnamefont {A.}~\bibnamefont
  {Hamma}},\ }\href
  {http://scitation.aip.org/content/aip/journal/jmp/50/10/10.1063/1.3236685}
  {\bibfield  {journal} {\bibinfo  {journal} {J. Math. Phys.}\ }\textbf
  {\bibinfo {volume} {50}},\  (\bibinfo {year} {2009})}\BibitemShut {NoStop}%
\bibitem [{\citenamefont {R{\o}nnow}\ \emph {et~al.}(2014)\citenamefont
  {R{\o}nnow}, \citenamefont {Wang}, \citenamefont {Job}, \citenamefont
  {Boixo}, \citenamefont {Isakov}, \citenamefont {Wecker}, \citenamefont
  {Martinis}, \citenamefont {Lidar},\ and\ \citenamefont {Troyer}}]{speedup}%
  \BibitemOpen
  \bibfield  {author} {\bibinfo {author} {\bibfnamefont {T.~F.}\ \bibnamefont
  {R{\o}nnow}}, \bibinfo {author} {\bibfnamefont {Z.}~\bibnamefont {Wang}},
  \bibinfo {author} {\bibfnamefont {J.}~\bibnamefont {Job}}, \bibinfo {author}
  {\bibfnamefont {S.}~\bibnamefont {Boixo}}, \bibinfo {author} {\bibfnamefont
  {S.~V.}\ \bibnamefont {Isakov}}, \bibinfo {author} {\bibfnamefont
  {D.}~\bibnamefont {Wecker}}, \bibinfo {author} {\bibfnamefont {J.~M.}\
  \bibnamefont {Martinis}}, \bibinfo {author} {\bibfnamefont {D.~A.}\
  \bibnamefont {Lidar}}, \ and\ \bibinfo {author} {\bibfnamefont
  {M.}~\bibnamefont {Troyer}},\ }\href {\doibase 10.1126/science.1252319}
  {\bibfield  {journal} {\bibinfo  {journal} {Science}\ }\textbf {\bibinfo
  {volume} {345}},\ \bibinfo {pages} {420} (\bibinfo {year}
  {2014})}\BibitemShut {NoStop}%
\bibitem [{\citenamefont {Roland}\ and\ \citenamefont
  {Cerf}(2002)}]{Roland:2002ul}%
  \BibitemOpen
  \bibfield  {author} {\bibinfo {author} {\bibfnamefont {J.}~\bibnamefont
  {Roland}}\ and\ \bibinfo {author} {\bibfnamefont {N.~J.}\ \bibnamefont
  {Cerf}},\ }\href {http://link.aps.org/doi/10.1103/PhysRevA.65.042308}
  {\bibfield  {journal} {\bibinfo  {journal} {Phys. Rev. A}\ }\textbf {\bibinfo
  {volume} {65}},\ \bibinfo {pages} {042308} (\bibinfo {year}
  {2002})}\BibitemShut {NoStop}%
\bibitem [{\citenamefont {Rezakhani}\ \emph {et~al.}(2010)\citenamefont
  {Rezakhani}, \citenamefont {Pimachev},\ and\ \citenamefont {Lidar}}]{RPL:10}%
  \BibitemOpen
  \bibfield  {author} {\bibinfo {author} {\bibfnamefont {A.~T.}\ \bibnamefont
  {Rezakhani}}, \bibinfo {author} {\bibfnamefont {A.~K.}\ \bibnamefont
  {Pimachev}}, \ and\ \bibinfo {author} {\bibfnamefont {D.~A.}\ \bibnamefont
  {Lidar}},\ }\href {http://link.aps.org/doi/10.1103/PhysRevA.82.052305}
  {\bibfield  {journal} {\bibinfo  {journal} {Phys. Rev. A}\ }\textbf {\bibinfo
  {volume} {82}},\ \bibinfo {pages} {052305} (\bibinfo {year}
  {2010})}\BibitemShut {NoStop}%
\bibitem [{\citenamefont {Klauder}(1979)}]{klauder1979path}%
  \BibitemOpen
  \bibfield  {author} {\bibinfo {author} {\bibfnamefont {J.~R.}\ \bibnamefont
  {Klauder}},\ }\href
  {http://journals.aps.org/prd/abstract/10.1103/PhysRevD.19.2349} {\bibfield
  {journal} {\bibinfo  {journal} {Physical Review D}\ }\textbf {\bibinfo
  {volume} {19}},\ \bibinfo {pages} {2349} (\bibinfo {year}
  {1979})}\BibitemShut {NoStop}%
\bibitem [{Note1()}]{Note1}%
  \BibitemOpen
  \bibinfo {note} {The semiclassical potential has been used profitably in
  various QA studies, e.g., Refs.~\cite
  {Farhi-spike-problem,Schaller:2007uq,FarhiAQC:02,Boixo:2014yu}}\BibitemShut
  {NoStop}%
\bibitem [{\citenamefont {Lieb}(1973)}]{Lieb:1973}%
  \BibitemOpen
  \bibfield  {author} {\bibinfo {author} {\bibfnamefont {E.~H.}\ \bibnamefont
  {Lieb}},\ }\href {\doibase 10.1007/BF01646493} {\bibfield  {journal}
  {\bibinfo  {journal} {Communications in Mathematical Physics}\ }\textbf
  {\bibinfo {volume} {31}},\ \bibinfo {pages} {327} (\bibinfo {year}
  {1973})}\BibitemShut {NoStop}%
\bibitem [{\citenamefont {Owerre}\ and\ \citenamefont
  {Paranjape}(2015)}]{owerre2015macroscopic}%
  \BibitemOpen
  \bibfield  {author} {\bibinfo {author} {\bibfnamefont {S.}~\bibnamefont
  {Owerre}}\ and\ \bibinfo {author} {\bibfnamefont {M.}~\bibnamefont
  {Paranjape}},\ }\href
  {http://www.sciencedirect.com/science/article/pii/S0370157314003147}
  {\bibfield  {journal} {\bibinfo  {journal} {Physics Reports}\ }\textbf
  {\bibinfo {volume} {546}},\ \bibinfo {pages} {1} (\bibinfo {year}
  {2015})}\BibitemShut {NoStop}%
\bibitem [{\citenamefont {Somma}\ \emph {et~al.}(2012)\citenamefont {Somma},
  \citenamefont {Nagaj},\ and\ \citenamefont {Kieferov{\'a}}}]{Somma:2012kx}%
  \BibitemOpen
  \bibfield  {author} {\bibinfo {author} {\bibfnamefont {R.~D.}\ \bibnamefont
  {Somma}}, \bibinfo {author} {\bibfnamefont {D.}~\bibnamefont {Nagaj}}, \ and\
  \bibinfo {author} {\bibfnamefont {M.}~\bibnamefont {Kieferov{\'a}}},\ }\href
  {http://link.aps.org/doi/10.1103/PhysRevLett.109.050501} {\bibfield
  {journal} {\bibinfo  {journal} {Phys. Rev. Lett.}\ }\textbf {\bibinfo
  {volume} {109}},\ \bibinfo {pages} {050501} (\bibinfo {year}
  {2012})}\BibitemShut {NoStop}%
\bibitem [{\citenamefont {Crosson}\ \emph {et~al.}(2014)\citenamefont
  {Crosson}, \citenamefont {Farhi}, \citenamefont {Lin}, \citenamefont {Lin},\
  and\ \citenamefont {Shor}}]{crosson2014different}%
  \BibitemOpen
  \bibfield  {author} {\bibinfo {author} {\bibfnamefont {E.}~\bibnamefont
  {Crosson}}, \bibinfo {author} {\bibfnamefont {E.}~\bibnamefont {Farhi}},
  \bibinfo {author} {\bibfnamefont {C.~Y.-Y.}\ \bibnamefont {Lin}}, \bibinfo
  {author} {\bibfnamefont {H.-H.}\ \bibnamefont {Lin}}, \ and\ \bibinfo
  {author} {\bibfnamefont {P.}~\bibnamefont {Shor}},\ }\href
  {http://arxiv.org/abs/1401.7320} {\bibfield  {journal} {\bibinfo  {journal}
  {arXiv preprint arXiv:1401.7320}\ } (\bibinfo {year} {2014})}\BibitemShut
  {NoStop}%
\bibitem [{\citenamefont {Hen}(2014)}]{Hen2014}%
  \BibitemOpen
  \bibfield  {author} {\bibinfo {author} {\bibfnamefont {I.}~\bibnamefont
  {Hen}},\ }\href {http://stacks.iop.org/1751-8121/47/i=4/a=045305} {\bibfield
  {journal} {\bibinfo  {journal} {Journal of Physics A: Mathematical and
  Theoretical}\ }\textbf {\bibinfo {volume} {47}},\ \bibinfo {pages} {045305}
  (\bibinfo {year} {2014})}\BibitemShut {NoStop}%
\bibitem [{\citenamefont {Steiger}\ \emph {et~al.}(2015)\citenamefont
  {Steiger}, \citenamefont {R{\o}nnow},\ and\ \citenamefont
  {Troyer}}]{Steiger:2015fk}%
  \BibitemOpen
  \bibfield  {author} {\bibinfo {author} {\bibfnamefont {D.~S.}\ \bibnamefont
  {Steiger}}, \bibinfo {author} {\bibfnamefont {T.~F.}\ \bibnamefont
  {R{\o}nnow}}, \ and\ \bibinfo {author} {\bibfnamefont {M.}~\bibnamefont
  {Troyer}},\ }\href {http://arXiv.org/abs/1504.07991} {\bibfield  {journal}
  {\bibinfo  {journal} {arXiv:1504.07991}\ } (\bibinfo {year}
  {2015})}\BibitemShut {NoStop}%
\bibitem [{Note2()}]{Note2}%
  \BibitemOpen
  \bibinfo {note} {Note that this is not the version of the Grover problem that
  admits a quantum speedup, as this requires a rank-$1$ driver Hamiltonian
  \cite {Roland:2002ul}).}\BibitemShut {Stop}%
\bibitem [{\citenamefont {Jarret}\ and\ \citenamefont
  {Jordan}(2014)}]{jarret2014fundamental}%
  \BibitemOpen
  \bibfield  {author} {\bibinfo {author} {\bibfnamefont {M.}~\bibnamefont
  {Jarret}}\ and\ \bibinfo {author} {\bibfnamefont {S.~P.}\ \bibnamefont
  {Jordan}},\ }\href
  {http://scitation.aip.org/content/aip/journal/jmp/55/5/10.1063/1.4878120}
  {\bibfield  {journal} {\bibinfo  {journal} {Journal of Mathematical Physics}\
  }\textbf {\bibinfo {volume} {55}},\ \bibinfo {pages} {052104} (\bibinfo
  {year} {2014})}\BibitemShut {NoStop}%
\bibitem [{\citenamefont {Kaminsky}(2014{\natexlab{a}})}]{Kaminsky:2014}%
  \BibitemOpen
  \bibfield  {author} {\bibinfo {author} {\bibfnamefont {W.~M.}\ \bibnamefont
  {Kaminsky}},\ }in\ \href@noop {} {\emph {\bibinfo {booktitle} {Quantum
  Optimization Workshop}}}\ (\bibinfo {organization} {The Fields Institute for
  Research in Mathematical Sciences},\ \bibinfo {year} {2014})\BibitemShut
  {NoStop}%
\bibitem [{\citenamefont
  {Kaminsky}(2014{\natexlab{b}})}]{Kaminsky:USC-talk-2014}%
  \BibitemOpen
  \bibfield  {author} {\bibinfo {author} {\bibfnamefont {W.~M.}\ \bibnamefont
  {Kaminsky}},\ }\href@noop {} {\enquote {\bibinfo {title} {{\textit{D}iabatic
  Quantum Computation -- Why it Might Not Really Pay to Go for the ``A''?}}}\
  }\bibinfo {howpublished} {USC Quantum Information Seminar} (\bibinfo {year}
  {2014}{\natexlab{b}})\BibitemShut {NoStop}%
\bibitem [{\citenamefont {Farhi}\ \emph
  {et~al.}(2002{\natexlab{b}})\citenamefont {Farhi}, \citenamefont
  {Goldstone},\ and\ \citenamefont {Gutmann}}]{FarhiAQC:02}%
  \BibitemOpen
  \bibfield  {author} {\bibinfo {author} {\bibfnamefont {E.}~\bibnamefont
  {Farhi}}, \bibinfo {author} {\bibfnamefont {J.}~\bibnamefont {Goldstone}}, \
  and\ \bibinfo {author} {\bibfnamefont {S.}~\bibnamefont {Gutmann}},\ }\href
  {http://arXiv.org/abs/quant-ph/0208135} {\bibfield  {journal} {\bibinfo
  {journal} {arXiv:quant-ph/0208135}\ } (\bibinfo {year}
  {2002}{\natexlab{b}})}\BibitemShut {NoStop}%
\bibitem [{\citenamefont {Boixo}\ \emph {et~al.}(2014)\citenamefont {Boixo},
  \citenamefont {Smelyanskiy}, \citenamefont {Shabani}, \citenamefont {Isakov},
  \citenamefont {Dykman}, \citenamefont {Denchev}, \citenamefont {Amin},
  \citenamefont {Smirnov}, \citenamefont {Mohseni},\ and\ \citenamefont
  {Neven}}]{Boixo:2014yu}%
  \BibitemOpen
  \bibfield  {author} {\bibinfo {author} {\bibfnamefont {S.}~\bibnamefont
  {Boixo}}, \bibinfo {author} {\bibfnamefont {V.~N.}\ \bibnamefont
  {Smelyanskiy}}, \bibinfo {author} {\bibfnamefont {A.}~\bibnamefont
  {Shabani}}, \bibinfo {author} {\bibfnamefont {S.~V.}\ \bibnamefont {Isakov}},
  \bibinfo {author} {\bibfnamefont {M.}~\bibnamefont {Dykman}}, \bibinfo
  {author} {\bibfnamefont {V.~S.}\ \bibnamefont {Denchev}}, \bibinfo {author}
  {\bibfnamefont {M.}~\bibnamefont {Amin}}, \bibinfo {author} {\bibfnamefont
  {A.}~\bibnamefont {Smirnov}}, \bibinfo {author} {\bibfnamefont
  {M.}~\bibnamefont {Mohseni}}, \ and\ \bibinfo {author} {\bibfnamefont
  {H.}~\bibnamefont {Neven}},\ }\href {http://arXiv.org/abs/1411.4036}
  {\bibfield  {journal} {\bibinfo  {journal} {arXiv:1411.4036}\ } (\bibinfo
  {year} {2014})}\BibitemShut {NoStop}%
\bibitem [{\citenamefont {{Metropolis}}\ \emph {et~al.}(1953)\citenamefont
  {{Metropolis}}, \citenamefont {{Rosenbluth}}, \citenamefont {{Rosenbluth}},
  \citenamefont {{Teller}},\ and\ \citenamefont
  {{Teller}}}]{1953JChPh..21.1087M}%
  \BibitemOpen
  \bibfield  {author} {\bibinfo {author} {\bibfnamefont {N.}~\bibnamefont
  {{Metropolis}}}, \bibinfo {author} {\bibfnamefont {A.~W.}\ \bibnamefont
  {{Rosenbluth}}}, \bibinfo {author} {\bibfnamefont {M.~N.}\ \bibnamefont
  {{Rosenbluth}}}, \bibinfo {author} {\bibfnamefont {A.~H.}\ \bibnamefont
  {{Teller}}}, \ and\ \bibinfo {author} {\bibfnamefont {E.}~\bibnamefont
  {{Teller}}},\ }\href {\doibase 10.1063/1.1699114} {\bibfield  {journal}
  {\bibinfo  {journal} {\jcp}\ }\textbf {\bibinfo {volume} {21}},\ \bibinfo
  {pages} {1087} (\bibinfo {year} {1953})}\BibitemShut {NoStop}%
\bibitem [{\citenamefont {Hastings}(1970)}]{HASTINGS01041970}%
  \BibitemOpen
  \bibfield  {author} {\bibinfo {author} {\bibfnamefont {W.~K.}\ \bibnamefont
  {Hastings}},\ }\href {\doibase 10.1093/biomet/57.1.97} {\bibfield  {journal}
  {\bibinfo  {journal} {Biometrika}\ }\textbf {\bibinfo {volume} {57}},\
  \bibinfo {pages} {97} (\bibinfo {year} {1970})}\BibitemShut {NoStop}%
\bibitem [{\citenamefont {Manousiouthakis}\ and\ \citenamefont
  {Deem}(1999)}]{Manousiouthakis:1999}%
  \BibitemOpen
  \bibfield  {author} {\bibinfo {author} {\bibfnamefont {V.~I.}\ \bibnamefont
  {Manousiouthakis}}\ and\ \bibinfo {author} {\bibfnamefont {M.~W.}\
  \bibnamefont {Deem}},\ }\href {\doibase 10.1063/1.477973} {\bibfield
  {journal} {\bibinfo  {journal} {J. Chem. Phys.}\ }\textbf {\bibinfo {volume}
  {110}},\ \bibinfo {pages} {2753} (\bibinfo {year} {1999})}\BibitemShut
  {NoStop}%
\bibitem [{\citenamefont {Hen}\ \emph {et~al.}(2015)\citenamefont {Hen},
  \citenamefont {Job}, \citenamefont {Albash}, \citenamefont {R{\o}nnow},
  \citenamefont {Troyer},\ and\ \citenamefont {Lidar}}]{Hen:2015rt}%
  \BibitemOpen
  \bibfield  {author} {\bibinfo {author} {\bibfnamefont {I.}~\bibnamefont
  {Hen}}, \bibinfo {author} {\bibfnamefont {J.}~\bibnamefont {Job}}, \bibinfo
  {author} {\bibfnamefont {T.}~\bibnamefont {Albash}}, \bibinfo {author}
  {\bibfnamefont {T.~F.}\ \bibnamefont {R{\o}nnow}}, \bibinfo {author}
  {\bibfnamefont {M.}~\bibnamefont {Troyer}}, \ and\ \bibinfo {author}
  {\bibfnamefont {D.}~\bibnamefont {Lidar}},\ }\href
  {http://arXiv.org/abs/1502.01663} {\bibfield  {journal} {\bibinfo  {journal}
  {arXiv:1502.01663}\ } (\bibinfo {year} {2015})}\BibitemShut {NoStop}%
\bibitem [{\citenamefont {Cash}\ and\ \citenamefont
  {Karp}(1990)}]{Cash:1990:VOR:79505.79507}%
  \BibitemOpen
  \bibfield  {author} {\bibinfo {author} {\bibfnamefont {J.~R.}\ \bibnamefont
  {Cash}}\ and\ \bibinfo {author} {\bibfnamefont {A.~H.}\ \bibnamefont
  {Karp}},\ }\href {\doibase 10.1145/79505.79507} {\bibfield  {journal}
  {\bibinfo  {journal} {ACM Trans. Math. Softw.}\ }\textbf {\bibinfo {volume}
  {16}},\ \bibinfo {pages} {201} (\bibinfo {year} {1990})}\BibitemShut
  {NoStop}%
\bibitem [{\citenamefont {Dormand}\ and\ \citenamefont
  {Prince}(1980)}]{Dormand198019}%
  \BibitemOpen
  \bibfield  {author} {\bibinfo {author} {\bibfnamefont {J.}~\bibnamefont
  {Dormand}}\ and\ \bibinfo {author} {\bibfnamefont {P.}~\bibnamefont
  {Prince}},\ }\href {\doibase 10.1016/0771-050X(80)90013-3} {\bibfield
  {journal} {\bibinfo  {journal} {Journal of Computational and Applied
  Mathematics}\ }\textbf {\bibinfo {volume} {6}},\ \bibinfo {pages} {19 }
  (\bibinfo {year} {1980})}\BibitemShut {NoStop}%
\bibitem [{\citenamefont {{A. Messiah}}(1962)}]{Messiah:vol2}%
  \BibitemOpen
  \bibfield  {author} {\bibinfo {author} {\bibnamefont {{A. Messiah}}},\
  }\href@noop {} {\emph {\bibinfo {title} {{Quantum Mechanics}}}},\
  Vol.~\bibinfo {volume} {2}\ (\bibinfo  {publisher} {{North-Holland}},\
  \bibinfo {address} {{Amsterdam}},\ \bibinfo {year} {1962})\BibitemShut
  {NoStop}%
\bibitem [{\citenamefont {Amin}(2009)}]{Amin:09}%
  \BibitemOpen
  \bibfield  {author} {\bibinfo {author} {\bibfnamefont {M.~H.~S.}\
  \bibnamefont {Amin}},\ }\href {\doibase 10.1103/PhysRevLett.102.220401}
  {\bibfield  {journal} {\bibinfo  {journal} {Phys. Rev. Lett.}\ }\textbf
  {\bibinfo {volume} {102}},\ \bibinfo {pages} {220401} (\bibinfo {year}
  {2009})}\BibitemShut {NoStop}%
\bibitem [{\citenamefont {Smolin}\ and\ \citenamefont {Smith}(2013)}]{Smolin}%
  \BibitemOpen
  \bibfield  {author} {\bibinfo {author} {\bibfnamefont {J.~A.}\ \bibnamefont
  {Smolin}}\ and\ \bibinfo {author} {\bibfnamefont {G.}~\bibnamefont {Smith}},\
  }\href {http://arXiv.org/abs/1305.4904} {\bibfield  {journal} {\bibinfo
  {journal} {arXiv:1305.4904}\ } (\bibinfo {year} {2013})}\BibitemShut
  {NoStop}%
\bibitem [{\citenamefont {Albash}\ \emph {et~al.}(2015)\citenamefont {Albash},
  \citenamefont {R{\o}nnow}, \citenamefont {Troyer},\ and\ \citenamefont
  {Lidar}}]{Albash:2014if}%
  \BibitemOpen
  \bibfield  {author} {\bibinfo {author} {\bibfnamefont {T.}~\bibnamefont
  {Albash}}, \bibinfo {author} {\bibfnamefont {T.~F.}\ \bibnamefont
  {R{\o}nnow}}, \bibinfo {author} {\bibfnamefont {M.}~\bibnamefont {Troyer}}, \
  and\ \bibinfo {author} {\bibfnamefont {D.~A.}\ \bibnamefont {Lidar}},\ }\href
  {\doibase 10.1140/epjst/e2015-02346-0} {\bibfield  {journal} {\bibinfo
  {journal} {Eur. Phys. J. Spec. Top.}\ }\textbf {\bibinfo {volume} {224}},\
  \bibinfo {pages} {111} (\bibinfo {year} {2015})}\BibitemShut {NoStop}%
\bibitem [{\citenamefont {Arecchi}\ \emph {et~al.}(1972)\citenamefont
  {Arecchi}, \citenamefont {Courtens}, \citenamefont {Gilmore},\ and\
  \citenamefont {Thomas}}]{arecchi1972atomic}%
  \BibitemOpen
  \bibfield  {author} {\bibinfo {author} {\bibfnamefont {F.}~\bibnamefont
  {Arecchi}}, \bibinfo {author} {\bibfnamefont {E.}~\bibnamefont {Courtens}},
  \bibinfo {author} {\bibfnamefont {R.}~\bibnamefont {Gilmore}}, \ and\
  \bibinfo {author} {\bibfnamefont {H.}~\bibnamefont {Thomas}},\ }\href
  {http://journals.aps.org/pra/abstract/10.1103/PhysRevA.6.2211} {\bibfield
  {journal} {\bibinfo  {journal} {Physical Review A}\ }\textbf {\bibinfo
  {volume} {6}},\ \bibinfo {pages} {2211} (\bibinfo {year} {1972})}\BibitemShut
  {NoStop}%
\bibitem [{\citenamefont {Marto\ifmmode~\check{n}\else \v{n}\fi{}\'ak}\ \emph
  {et~al.}(2002)\citenamefont {Marto\ifmmode~\check{n}\else \v{n}\fi{}\'ak},
  \citenamefont {Santoro},\ and\ \citenamefont {Tosatti}}]{sqa1}%
  \BibitemOpen
  \bibfield  {author} {\bibinfo {author} {\bibfnamefont {R.}~\bibnamefont
  {Marto\ifmmode~\check{n}\else \v{n}\fi{}\'ak}}, \bibinfo {author}
  {\bibfnamefont {G.~E.}\ \bibnamefont {Santoro}}, \ and\ \bibinfo {author}
  {\bibfnamefont {E.}~\bibnamefont {Tosatti}},\ }\href
  {http://link.aps.org/doi/10.1103/PhysRevB.66.094203} {\bibfield  {journal}
  {\bibinfo  {journal} {Phys. Rev. B}\ }\textbf {\bibinfo {volume} {66}},\
  \bibinfo {pages} {094203} (\bibinfo {year} {2002})}\BibitemShut {NoStop}%
\bibitem [{\citenamefont {Santoro}\ \emph {et~al.}(2002)\citenamefont
  {Santoro}, \citenamefont {Marto\v{n}\'{a}k}, \citenamefont {Tosatti},\ and\
  \citenamefont {Car}}]{Santoro}%
  \BibitemOpen
  \bibfield  {author} {\bibinfo {author} {\bibfnamefont {G.~E.}\ \bibnamefont
  {Santoro}}, \bibinfo {author} {\bibfnamefont {R.}~\bibnamefont
  {Marto\v{n}\'{a}k}}, \bibinfo {author} {\bibfnamefont {E.}~\bibnamefont
  {Tosatti}}, \ and\ \bibinfo {author} {\bibfnamefont {R.}~\bibnamefont
  {Car}},\ }\href {\doibase 10.1126/science.1068774} {\bibfield  {journal}
  {\bibinfo  {journal} {Science}\ }\textbf {\bibinfo {volume} {295}},\ \bibinfo
  {pages} {2427} (\bibinfo {year} {2002})}\BibitemShut {NoStop}%
\bibitem [{\citenamefont {Wolff}(1989)}]{PhysRevLett.62.361}%
  \BibitemOpen
  \bibfield  {author} {\bibinfo {author} {\bibfnamefont {U.}~\bibnamefont
  {Wolff}},\ }\href {\doibase 10.1103/PhysRevLett.62.361} {\bibfield  {journal}
  {\bibinfo  {journal} {Phys. Rev. Lett.}\ }\textbf {\bibinfo {volume} {62}},\
  \bibinfo {pages} {361} (\bibinfo {year} {1989})}\BibitemShut {NoStop}%
\bibitem [{\citenamefont {Horn}\ and\ \citenamefont
  {Johnson}(2012)}]{horn2012matrix}%
  \BibitemOpen
  \bibfield  {author} {\bibinfo {author} {\bibfnamefont {R.~A.}\ \bibnamefont
  {Horn}}\ and\ \bibinfo {author} {\bibfnamefont {C.~R.}\ \bibnamefont
  {Johnson}},\ }\href@noop {} {\emph {\bibinfo {title} {Matrix Analysis}}}\
  (\bibinfo  {publisher} {Cambridge University Press},\ \bibinfo {year}
  {2012})\BibitemShut {NoStop}%
\bibitem [{\citenamefont {Stefanov}(1995)}]{stefanov1995mean}%
  \BibitemOpen
  \bibfield  {author} {\bibinfo {author} {\bibfnamefont {V.~T.}\ \bibnamefont
  {Stefanov}},\ }\href
  {http://www.jstor.org/discover/10.2307/3215137?sid=21105983789153&uid=2&uid=4}
  {\bibfield  {journal} {\bibinfo  {journal} {Journal of Applied Probability}\
  ,\ \bibinfo {pages} {846}} (\bibinfo {year} {1995})}\BibitemShut {NoStop}%
\bibitem [{\citenamefont {Krafft}\ and\ \citenamefont
  {Schaefer}(1993)}]{krafft1993mean}%
  \BibitemOpen
  \bibfield  {author} {\bibinfo {author} {\bibfnamefont {O.}~\bibnamefont
  {Krafft}}\ and\ \bibinfo {author} {\bibfnamefont {M.}~\bibnamefont
  {Schaefer}},\ }\href {http://www.jstor.org/stable/3214525} {\bibfield
  {journal} {\bibinfo  {journal} {Journal of Applied Probability}\ }\textbf
  {\bibinfo {volume} {30}},\ \bibinfo {pages} {964} (\bibinfo {year}
  {1993})}\BibitemShut {NoStop}%
\end{thebibliography}
\end{document}